\theoremstyle{plain}
\newtheorem{theorem}{Theorem}
\newtheorem{lemma}[theorem]{Lemma}
\newtheorem{observation}[theorem]{Observation}
\newtheorem{problem}[theorem]{Problem}
\def\A{\mathcal{A}}
\def\C{\mathcal{C}}
\def\D{\mathcal{D}}
\def\G{\mathcal{G}}
\def\P{\mathcal{P}}
\def\S{\mathcal{S}}
\def\T{\mathcal{T}}
\newcommand{\frechet}{Fr\'echet}
\newcommand{\dfd}{d_{dF}}
\newcommand{\eps}{\varepsilon}
\newcommand{\reals}[1]{\mathbb{R}}
\DeclareMathOperator{\polylog}{\mathrm{polylog}}
\DeclareMathOperator{\width}{\mathit{\delta_x}}
\DeclareMathOperator{\height}{\mathit{\delta_y}}
\newcommand{\nn}{\texorpdfstring{{\sc NNC}}{NNC}}
\newcommand{\ann}{\texorpdfstring{{\sc ANNC}}{ANN}}
\newcommand{\klc}{\texorpdfstring{{\sc $(k,\ell)$-Center}}{(k,l)-Center}}
\newcommand{\otc}{\texorpdfstring{{\sc $(1,2)$-Center}}{(1,2)-Center}}
\date{}
\begin{document}
\title{Efficient Nearest-Neighbor Query and Clustering of\\ Planar Curves}

\author[1]{Boris Aronov\thanks{B.~Aronov was supported by NSF grants CCF-12-18791 and CCF-15-40656, and by grant 2014/170 from the US-Israel Binational Science Foundation.}}
\author[2]{Omrit Filtser\thanks{O.~Filtser was supported by the Israeli Ministry of Science, Technology \& Space, and by grant 2014/170 from the US-Israel Binational Science Foundation.}}
\author[3]{Michael Horton\thanks{Most of the work on this project by M.~Horton was performed while visiting the Department of Computer Science and Engineering at the Tandon School of Engineering, New York University in the spring/summer of 2018, partially supported by NSF grants CCF-12-18791.}}
\author[2]{Matthew J. Katz\thanks{M.~Katz was supported by grant 1884/16 from the Israel Science Foundation and by grant 2014/170 from the US-Israel Binational Science Foundation.
Part of the work on this project by M.~Katz was performed while visiting the  Department of Computer Science and Engineering at the Tandon School of Engineering, New York University in the spring of 2018, partially supported by NSF grants CCF-12-18791 and CCF-15-40656.
}}
\author[1]{\\Khadijeh Sheikhan}
\affil[1]{
	Tandon School of Engineering, New York University, Brooklyn, NY 11201, USA\protect\\
	Email: \texttt{boris.aronov@nyu.edu,khadije.sheikhan@gmail.com}}

\affil[2]{
Ben-Gurion University of the Negev, Beer-Sheva 84105, Israel\protect\\
Email: \texttt{\{omritna,matya\}@cs.bgu.ac.il}}

\affil[3]{
SPORTLOGiQ Inc., Montreal, Quebec H2T 3B3, Canada\protect\\
Email: \texttt{michael.horton@sportlogiq.com}}

\maketitle              %
\begin{abstract}
	
  We study two fundamental problems dealing with curves in the plane, namely, the nearest-neighbor problem and the center problem. Let $\C$ be a set of $n$ polygonal curves, each of size $m$. In the nearest-neighbor problem, the goal is to construct a compact data structure over $\C$, such that, given a query curve $Q$, one can efficiently find the curve in $\C$ closest to $Q$. In the center problem, the goal is to find a curve $Q$, such that the maximum distance between $Q$ and the curves in $\C$ is minimized. We use the well-known discrete \frechet\ distance function, both under~$L_\infty$ and under $L_2$, to measure the distance between two curves. 
	
  For the nearest-neighbor problem, despite discouraging previous results, we identify two important cases for which it is possible to obtain practical bounds, even when $m$ and $n$ are large. In these cases, either $Q$ is a line segment or $\C$ consists of line segments, and the bounds on the size of the data structure and query time are nearly linear in the size of the input and query curve, respectively. The returned answer is either exact under $L_\infty$, or approximated to within a factor of $1+\eps$ under~$L_2$. We also consider the variants in which the location of the input curves is only fixed up to translation, and obtain similar bounds, under $L_\infty$.
	
	As for the center problem, we study the case where the center is a line segment, i.e., we seek the line segment that represents the given set as well as possible. We present near-linear time exact algorithms under $L_\infty$, even when the location of the input curves is only fixed up to translation. Under $L_2$, we present a roughly $O(n^2m^3)$-time exact algorithm. 
	
\end{abstract}

\section{Introduction}
\label{sec:intro}

We consider efficient algorithms for two fundamental data-mining problems for sets of polygonal curves in the plane: nearest-neighbor query and clustering.
Both of these problems have been studied extensively and bounds on the running time and storage consumption have been obtained.
In general, these bounds suggest that the existence of algorithms that can efficiently process large datasets of curves of high complexity is unlikely.
Therefore we study special cases of the problems where some curves are assumed  to be directed line segments (henceforth referred to as segments), and the distance metric is the discrete \frechet\ distance.

Such analysis of curves has many practical applications, where the position of an object as it changes over time is recorded as a sequence of readings from a sensor to generate a \emph{trajectory}. For example, the location readings from GPS devices attached to migrating animals~\cite{ABBKKW14}, the traces of players during a football match captured by a computer vision system~\cite{Gudmundsson17}, or stock market prices~\cite{NW13}.
In each case, the output is an ordered sequence $C$ of $m$ vertices (i.e., the sensor readings), and by interpolating the location between each pair of vertices as a segment, a polygonal chain is obtained.

Given a collection $\C$ of $n$ curves, a natural question to ask is whether it is possible to preprocess $\C$ into a data structure so that the nearest curve in the collection to a query curve~$Q$ can be determined efficiently. 
This is the \emph{nearest-neighbor} problem for curves (\nn).

Indyk~\cite{Indyk02} gave a near-neighbor data structure for polygonal curves under the discrete \frechet\ distance. The data structure achieves an approximation factor of $O(\log m + \log \log n)$, where $n$ is the number of curves and $m$ is the maximum size of a curve. Its space consumption is very high, $O(|X|^{\sqrt{m}} (m^{\sqrt{m}}n)^2)$, where $|X|$ is the size of the domain on which the curves are defined, and the query time is $O(m^{O(1)} \log n)$. 

Later, Driemel and Silvestri~\cite{DS17} presented a locality-sensitive-hashing scheme for curves under the discrete \frechet\ distance, improving the result of Indyk for short curves. They also provide a trade-off between approximation quality and computational performance: for a parameter $k\in[m]$, a data structure using $O(2^{2k}m^{k-1} n\log n + mn)$ space is constructed that answers queries in $O(2^{2k}m^k \log n)$ time with an approximation factor of $O(m/k)$.

Recently, Emiris and Psarros~\cite{EmirisP18} presented near-neighbor data structures for curves under both discrete \frechet\ and dynamic time warping distance. Their algorithm achieves approximation factor of $1+\eps$, at the expense of increasing space usage and preprocessing time. For curves in the plane, the space used by their data structure is $\tilde{O}(n)\cdot(2+\frac{1}{\log m})^{O(m^{1/\eps}\cdot \log(1/\eps))}$   for discrete \frechet\ distance and $\tilde{O}(n)\cdot O(\frac{1}{\eps})^{m}$ for dynamic time warping distance, while the query time in both cases is $O(2^{2m}\log n)$. 
	
De Berg et al.~\cite{dBGM17} described a dynamic data structure for approximate nearest neighbor for curves (which can also be used for other types of queries such as approximate range searching), under the (continuous) \frechet\ distance. 
Their data structure uses $n\cdot O\left(\frac{1}{\eps}\right)^{2m}$ space and has $O(1)$ query time (for a segment query), but with an \emph{additive} error of $\eps \cdot reach(Q)$,  where $reach(Q)$ is the maximum distance between the start vertex of the query curve $Q$ and any other vertex of~$Q$. Furthermore, when the distance from $Q$ to its nearest neighbor is relatively large, the query procedure might fail.

Afshani and Driemel~\cite{AD18} studied range searching under both the discrete and continuous \frechet\ distance. In this problem, the goal is to preprocess $\C$ such that, given a query curve $Q$ of length $m_q$ and a radius $r$, all curves in $\C$ that are within distance $r$ of $Q$ can be found efficiently.
For the discrete \frechet\ distance in the plane, their data structure uses space in $O(n(\log \log n)^{m-1})$ and has query time in $O(\sqrt{n}\cdot\log^{O(m)}n\cdot m_q^{O(1)})$, assuming $m_q = \log^{O(1)}n$. They also show that any data structure in the pointer model that achieves $Q(n) + O(k)$ query time, where $k$ is the output size, has to use roughly $\Omega(n/Q(n))^2)$ space in the worst case,
even if queries are just points, for discrete \frechet\ distance!

De Berg, Cook, and Gudmundsson~\cite{dBCG13} considered range counting queries for curves under the continuous \frechet\ distance. 
Given a single polygonal curve $C$ with $m$ vertices, they show how to preprocess it into a data structure in $O(k \polylog m)$ time and space, so that, given a query segment $s$, one can return a constant approximation of the number of subcurves of $C$ that lie within distance $r$ of $s$ in $O(\frac{n}{\sqrt{k}} \polylog m)$ time, where $k$ is a parameter between $m$ and $m^2$.

Driemel and Har-Peled~\cite{DH12} preprocess a curve $C$ into a data structure of linear size, which, given a query segment $s$ and a subcurve of $C$, returns a $(1+\eps)$-approximation of the distance between $s$ and the subcurve in logarithmic~time.

Clustering is another fundamental problem in data analysis that aims to partition an input collection of curves into clusters where the curves within each cluster are similar in some sense, and a variety of formulations have been proposed~\cite{ACMM03,CL07,DKS16}.
The $k$-\textsc{Center} problem~\cite{Gonzalez85,AP02,HN79} is a classical problem in which a point set in a metric space is clustered.
The problem is defined as follows: given a set $\P$ of $n$ points, find a set $\G$ of $k$ center points, such that the maximum distance from a point in $\P$ to a nearest point in $\G$ is minimized.

Given an appropriate metric for curves, such as the discrete \frechet\ distance, one can define a metric space on the space of curves and then use a known algorithm for point clustering.
The clustering obtained by the \textsc{$k$-Center} problem is useful in that it groups similar curves together, thus uncovering a structure in the collection, and furthermore the center curves are of value as each can be viewed as a representative or exemplar of its cluster, and so the center curves are a compact summary of the collection.
However, an issue with this formulation, when applied to curves, is that the optimal center curves may be \emph{noisy}, i.e., the size of such a curve may be linear in the total number of vertices in its cluster, see~\cite{DKS16} for a detailed description. 
This can significantly reduce the utility of the centers as a method of summarizing the collection, as the centers should ideally be of low complexity.
To address this issue, Driemel et~al.~%
\cite{DKS16} introduced the \klc{} problem, where the $k$~desired center curves are limited to at most $\ell$ vertices each.

Inherent in both problems is a notion of \emph{similarity} between pairs of curves, which is expressed as a
distance function.
Several such functions have been proposed to compare curves, including the continuous~\cite{Frechet1906,AG95} and discrete~\cite{EM94} \frechet{} distance, the Hausdorff distance~\cite{Hausdorff27}, and dynamic time warping~\cite{BC94}.
We consider the problems under the discrete \frechet\ distance, which is often informally described by two frogs, each hopping from vertex to vertex along a polygonal curve.
At each step, one or both of the frogs may advance to the next vertex on its curve, and then the distance between them is measured using some point metric.
The discrete \frechet\ distance is defined as the smallest maximum distance between the frogs that can be achieved in such a joint sequence of hops of the frogs.
The point metrics that we consider are the $L_\infty$ and $L_2$~metrics.
The problem of computing the \frechet\ distance has been widely investigated~\cite{Bringmann14,BM16,AAKS14}, and in particular Bringmann and Mulzer~\cite{Bringmann14} showed that strongly subquadratic algorithms for the discrete \frechet\ distance are unlikely to exist.

Several hardness of approximation results for both the \nn{} and \klc{} problems are known.
For the \nn{} problem under the discrete \frechet\ distance, no data structure exists requiring $O(n^{2-\eps}\polylog{m})$ preprocessing and $O(n^{1-\eps}\polylog{m})$ query time for $\eps>0$, and achieving an approximation factor of $c<3$, unless the strong exponential time hypothesis fails~\cite{IM04,DKS16}.
In the case of the \klc{} problem under the discrete \frechet\ distance, Driemel et~al. showed that the problem is $\mathsf{NP}$-hard to approximate within a factor of $2-\eps$ when $k$ is part of the input, even if $\ell=2$ and $d=1$.
Furthermore, the problem is $\mathsf{NP}$-hard to approximate within a factor $2-\eps$ when $\ell$ is part of the input, even if $k=2$ and $d=1$, and when $d=2$ the inapproximability bound is $3\sin{\pi/3} \approx 2.598$~\cite{BDGHKLS19}.

However, we are interested in algorithms that can process large inputs, i.e., where $n$ and/or $m$ are large, which suggests that the processing time ought to be near-linear in $nm$ and the query time for \nn{} queries should be near-linear in $m$ only.
The above results imply that algorithms for the \nn{} and \klc{} problems that achieve such running times are not realistic.
Moreover, given that strongly subquadratic algorithms for computing the discrete \frechet\ distance are unlikely to exist, an algorithm that must compute pairwise distances explicitly will incur a roughly $O(m^2)$ running time. 
To circumvent these constraints, we focus on specific important settings: for the \nn{} problem, either the query curve is assumed to be a segment or the input curves are segments; and for the~\klc{} problem the center is a segment and $k=1$, i.e., we focus on the~\otc{} problem.

While these restricted settings are of theoretical interest, they also have a practical motivation when the inputs are trajectories of objects moving through space, such as migrating birds. 
A segment $ab$ can be considered a trip from a starting point $a$ to a destination $b$. 
Given a set of trajectories that travel from point to point in a noisy manner, we may wish to find the trajectory that most closely follows a direct path from $a$ to $b$, which is the \nn{} problem with a segment query.
Conversely, given an input of (directed) segments
and a query trajectory, the \nn{} problem would identify the segment (the simplest possible trajectory, in a sense) that the query trajectory most closely resembles.
In the case of the \otc{} problem, the obtained segment center for an input of trajectories would similarly represent the summary direction of the input, and the radius $r^*$ of the solution would be a measure of the maximum deviation from that direction for the collection.

\paragraph*{Our results.}

We present algorithms for a variety of settings (summarized in the table below) that achieve the desired running time and storage bounds.
Under the $L_\infty$ metric, we give exact algorithms for the \nn{} and \otc{} problems, including under translation, that achieve the roughly linear bounds. 
For the $L_2$ metric, $(1+\eps)$-approximation algorithms with near-linear running times are given for the \nn{} problem, and for the \otc{} problem, an exact algorithm is given whose running time is roughly $O(n^2m^3)$ and whose space requirement is quadratic. (Parentheses point to results under translation.)

\begin{table}[h]
\begin{center}
	\begin{tabular}{|l|p{4cm}|p{4cm}|p{4cm}|}
		    \hline
              & \textbf{Input/query:} & \textbf{Input/query:} & \textbf{Input:} \\
	    & \textbf{$m$-curves/segment} & \textbf{segments/$m$-curve} & \textbf{(1,2)-center} \\
			  \hline
		\textbf{$L_\infty$} & \Cref{sub:linfty:curve} (\Cref{sub:translation:curve}) & \Cref{sub:linfty:seg} (\Cref{sub:translation:seg}) & \Cref{sec:otc:linfty} (\Cref{sec:otc:translation})\\
		    \hline
		\textbf{$L_2$} & 
		\Cref{sub:l2:curve} & 
		\Cref{sub:l2:seg} &
		\Cref{sec:otc:l2} \\
		    \hline
	\end{tabular}
	\end{center}
	\label{table:query_results}
\end{table}

%\vspace{-1cm}

\section{Preliminaries}
\label{sec:preliminaries}

The discrete \frechet\ distance is a measure of similarity between two curves,  defined as follows.  Consider the curves $C = (p_1,\dotsc,p_m)$ and $C' = (q_1,\dotsc,q_{m'})$, viewed as sequences of vertices.  A (monotone) \emph{alignment} of the two curves is a sequence $\tau := \langle(p_{i_1},q_{j_1}),\dotsc,(p_{i_v},q_{j_v})\rangle$  of pairs of vertices, one from each curve, with $(i_1,j_1)=(1,1)$ and $(i_v,j_v) = (m,m')$.  Moreover, for each pair $(i_u,j_u)$, $1 < u \leq v$, one of the following holds:
(i)~$i_u = i_{u - 1}$ and $j_u = j_{u-1} + 1$,
(ii)~$i_u = i_{u-1} + 1$ and $j_u = j_{u-1}$, or
(iii)~$i_u = i_{u-1} + 1$ and $j_u = j_{u-1} + 1$.
The discrete \frechet\ distance is defined as 
\[\dfd^d(C,C')=\min_{\tau \in \T} \max_{(i,j)\in\tau} d(p_i, q_j),\]
with the minimum taken over the set $\T$ of all such alignments $\tau$, and where $d$ denotes the metric used for measuring interpoint distances.

We now give two alternative, equivalent definitions of the discrete \frechet\ distance between a segment $s=ab$ and a polygonal curve $C=(p_1,\dotsc,p_m)$ (we will drop the point metric~$d$ from the notation, where it is clear from the context). Let $C[i,j]:=\{p_i,\dotsc,p_j\}$.
Denote by $B(p,r)$ the ball of radius $r$ centered at $p$, in metric~$d$.
The discrete \frechet\ distance between~$s$ and $C$ is at most $r$, if and only if there exists a partition of $C$ into a prefix $C[1,i]$ and a suffix $C[i+1,m]$, such that $B(a,r)$ contains $C[1,i]$ and $B(b,r)$ contains $C[i+1,m]$.

A second equivalent definition is as follows.
Consider the intersections of balls around the points of $C$. Set $I_i(r)=B(p_1,r) \cap \cdots \cap B(p_i,r)$ and $\overline{I}_i(r)=B(p_{i+1},r) \cap \cdots \cap B(p_m,r)$, for $i=1,\dotsc,m-1$. Then, the discrete \frechet\ distance between $s$ and $C$ is at most $r$, if and only if there exists an index $1\le i\le m-1$ such that $a\in I_i(r)$ and $b\in \overline{I}_i(r)$.

Given a set $\C=\{C_1,\ldots,C_n\}$ of $n$ polygonal curves in the plane, the nearest-neighbor problem for curves is formulated as follows:
\begin{problem}[\nn{}]\label{prb:nnc}
  Preprocess $\C$ into a data structure, which, given a query curve $Q$, returns a curve $C\in\C$ with $\dfd(Q,C)=\min_{C_i \in \C}\dfd(Q,C_i)$.
\end{problem}

We consider two variants of \Cref{prb:nnc}: (i)~when the query curve $Q$ is a segment, and (ii)~when the input $\C$ is a set of segments.

Secondly, we consider a particular case of the \klc{} problem for curves~\cite{DKS16}. %
\begin{problem}[\otc]
	Find a segment $s^*$ that minimizes $\max_{C_i \in \C} \dfd(s,C_i)$, over all segments $s$.  
\end{problem}

\section{\nn{} and \texorpdfstring{$L_\infty$}{L-infinity} metric}
\label{sec:linfty}
When $d$ is the $L_\infty$ metric, each ball $B(p_i,r)$ is a square. Denote by $S(p,d)$ the axis-parallel square of radius $d$ centered at $p$.

Given a curve $C=(p_1,\ldots,p_m)$, let $d_i$, for $i = 1,\ldots,m-1$, be the smallest radius such that $S(p_1,d_i) \cap \cdots \cap S(p_i,d_i) \ne \emptyset$. In other words, $d_i$ is the radius of the smallest enclosing square of $C[1,i]$.
Similarly, let $\overline{d}_i$, for $i = 1, \ldots, m-1$, be the smallest radius such that $S(p_{i+1},\overline{d}_i) \cap \cdots \cap S(p_m,\overline{d}_i) \ne \emptyset$. 

For any $d > d_i$, $S(p_1,d) \cap \cdots \cap S(p_i,d)$ is a rectangle, $R_i=R_i(d)$, defined by four sides of the squares $S(p_1,d),\ldots,S(p_i,d)$, see \Cref{fig:intersection}.
These sides are fixed and do not depend on the specific value of $d$. Furthermore, the left, right, bottom and top sides of $R_i(d)$ are provided by the sides corresponding to the right-, left-, top- and bottom-most vertices in $C[1,i]$, respectively, i.e., the sides corresponding to the vertices defining the bounding box of $C[1,i]$.

\begin{figure}[h]
	\begin{center}
		\includegraphics[scale=1]{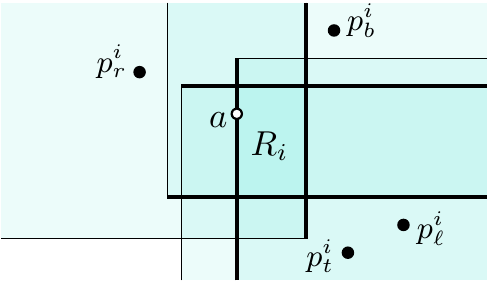}
	\end{center}
	\caption{\label{fig:intersection} The rectangle $R_i=R_i(d)$ and the vertices of the $i$th prefix of~$C$ that define it.}
\end{figure}

Denote by $p^i_{\ell}$ the vertex in the $i$th prefix of $C$ that contributes the left side %
to $R_i(d)$, i.e., the left side of $S(p^i_{\ell},d)$ defines the left side of $R_i(d)$. 
Furthermore, denote by $p^i_r$, $p^i_b$, and~$p^i_t$ the vertices of the $i$th prefix of $C$ that contribute the right, bottom, and top sides to~$R_i(d)$, respectively. 
Similarly, for any $d > \overline{d}_i$, we denote the four vertices of the $i$th suffix of $C$ that contribute the four sides of the rectangle $\overline{R}_i(d) = S(p_{i+1},d) \cap \cdots \cap S(p_m,d)$ by~$\overline{p}^i_{\ell}$, $\overline{p}^i_r$, $\overline{p}^i_b$, and $\overline{p}^i_t$, respectively.  

Finally, we use the notation $R^j_i = R^j_i(d)$ ($\overline{R}^j_i = \overline{R}^j_i(d)$) to refer to the rectangle $R_i = R_i(d)$ ($\overline{R}_i = \overline{R}_i(d)$) of curve $C_j$.

\begin{observation}\label{obs:linfty}
	Let $s=ab$ be a segment, $C$ be a curve, and let $d > 0$. Then, $\dfd(s, C) \le d$ if and only if there exists $i$, $1 \le i \le m-1$, such that $a \in R_i(d)$ and $b \in \overline{R}_{i}(d)$.
\end{observation}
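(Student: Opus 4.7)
The plan is to invoke the second equivalent characterization of the discrete \frechet\ distance between a segment and a curve stated in the Preliminaries, and then simply specialize it to the $L_\infty$ metric. Concretely, that characterization says $\dfd(s,C) \le d$ if and only if there exists $1 \le i \le m-1$ with $a \in I_i(d)$ and $b \in \overline{I}_i(d)$, where $I_i(d) = B(p_1,d)\cap\cdots\cap B(p_i,d)$ and $\overline{I}_i(d)=B(p_{i+1},d)\cap\cdots\cap B(p_m,d)$.

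Next, I would note that when $d$ is the $L_\infty$ metric, each ball $B(p_j,d)$ is precisely the axis-parallel square $S(p_j,d)$. Thus
\[
I_i(d) \;=\; S(p_1,d)\cap \cdots \cap S(p_i,d),
\]
and this intersection, by the definitions preceding the observation, is exactly the rectangle $R_i(d)$ (which degenerates to the empty set when $d < d_i$, to a single point or segment when $d = d_i$, and to a genuine rectangle when $d > d_i$). Symmetrically, $\overline{I}_i(d) = \overline{R}_i(d)$.

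Substituting these two identities into the characterization then yields: $\dfd(s,C) \le d$ iff there exists $1 \le i \le m-1$ with $a \in R_i(d)$ and $b \in \overline{R}_i(d)$, which is exactly the statement of the observation. Since both directions are immediate from the equivalent definition already recorded in the Preliminaries, there is no real obstacle; the only subtlety worth spelling out is that the cases $d \le d_i$ (resp.\ $d \le \overline{d}_i$) are compatible with both sides of the equivalence, because $R_i(d)$ (resp.\ $\overline{R}_i(d)$) is then empty or degenerate and $a \in I_i(d)$ (resp.\ $b \in \overline{I}_i(d)$) fails correspondingly.
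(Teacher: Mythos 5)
Your proof is correct and matches the paper's intent exactly: the paper states this as an unproved observation precisely because it is an immediate specialization of the second equivalent characterization ($a \in I_i(d)$, $b \in \overline{I}_i(d)$) from the Preliminaries to the $L_\infty$ case, where the balls become squares and their intersections become the rectangles $R_i(d)$ and $\overline{R}_i(d)$. Your handling of the degenerate cases $d \le d_i$ is a reasonable extra precaution but introduces nothing beyond what the paper implicitly assumes.
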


\subsection{Query is a segment}
\label{sub:linfty:curve}
Let $\C = \{C_1,\dotsc,C_n\}$ be the input curves, each of size $m$.
Given a query segment $s=ab$, the task is to find a curve $C \in \C$ such that $\dfd(s,C) = \min_{C' \in \C} \dfd(s,C')$.

\paragraph*{The data structure.}
The data structure is an eight-level search tree. The first level of the data structure is a search tree for the $x$-coordinates of the vertices $p^i_{\ell}$, over all curves~$C \in \C$, corresponding to the $nm$~left sides of the $nm$~rectangles~$R_i(d)$.  The second level corresponds to the $nm$ right sides of the rectangles $R_i(d)$, over all curves $C \in \C$. That is, for each node $u$ in the first level, we construct a search tree for the subset of $x$-coordinates of vertices $p^i_r$ which corresponds to the canonical set of $u$. Levels three and four of the data structure correspond to the bottom and top sides, respectively, of the rectangles $R_i(d)$, over all curves $C \in \C$, and they are constructed using the $y$-coordinates of the vertices $p^i_b$ and the $y$-coordinates of the vertices $p^i_t$, respectively.
The fifth level is constructed as follows. For each node $u$ in the fourth level, we construct a search tree for the subset of $x$-coordinates of vertices $\overline{p}^i_{\ell}$ which corresponds to the canonical set of $u$; that is, if the $y$-coordinate of $p^j_t$ is in $u$'s canonical subset, then the $x$-coordinate of $\overline{p}^{j}_{\ell}$ is in the subset corresponding to $u$'s canonical set.
The bottom four levels correspond to the four sides of the rectangles $\overline{R}_i(d)$ and are built using the $x$-coordinates of the vertices $\overline{p}^i_{\ell}$, the $x$-coordinates of the vertices $\overline{p}^i_r$, the $y$-coordinates of the vertices $\overline{p}^i_b$, and the $y$-coordinates of the vertices $\overline{p}^i_t$, respectively.

\paragraph*{The query algorithm.}
Given a segment $s=ab$ and a distance $d > 0$, we can use our data structure to determine whether there exists a curve $C \in \C$, such that $\dfd(s,C) \le d$. The search in the first and second levels of the data structure is done with $a.x$, the $x$-coordinate of $a$, in the third and fourth levels with $a.y$, in the fifth and sixth levels with $b.x$ and in the last two levels with $b.y$. When searching in the first level, instead of performing a comparison between $a.x$ and the value $v$ that is stored in the current node (which is an $x$-coordinate of some vertex $p^i_{\ell}$), we determine whether $a.x \ge v-d$. Similarly, when searching in the second level, at each node that we visit we determine whether $a.x \le v+d$, where $v$ is the value that is stored in the node, etc.  

Notice that if we store the list of curves that are represented in the canonical subset of each node in the bottom (i.e., eighth) level of the structure, then curves whose distance from $s$ is at most $d$ may also be reported in additional time roughly linear in their number.  

\paragraph*{Finding the closest curve.}
Let $s=ab$ be a segment, let $C$ be the curve in $\C$ that is closest to $s$ and set $d^*=\dfd(s,C)$. Then, there exists $1 \le i \le m-1$, such that $a \in R_i(d^*)$ and~$b \in \overline{R}_{i}(d^*)$. %
Moreover, one of the endpoints $a$ or $b$ lies on the boundary of its rectangle, since, otherwise, we could shrink the rectangles without `losing' the endpoints. Assume without loss of generality that $a$ lies on the left side of $R_i(d^*)$. Then, the difference between the $x$-coordinate of the vertex $p^i_{\ell}$ and $a.x$ is exactly $d^*$. This implies that we can find $d^*$ by performing a binary search in the set of all $x$-coordinates of vertices of curves in $\C$. In each step of the binary search, we need to determine whether $d \ge d^*$, where $d=v-a.x$ and $v$ is the current $x$-coordinate, and our goal is to find the smallest such $d$ for which the answer is still yes. We resolve a comparison by calling our data structure with the appropriate distance $d$. Since we do not know which of the two endpoints, $a$ or $b$, lies on the boundary of its rectangle and on which of its sides, we perform 8 binary searches, where each search returns a candidate distance. Finally, the smallest among these 8 candidate distances is the desired~$d^*$.

In other words, we perform 4 binary searches in the set of all $x$-coordinates of vertices of curves in $\C$. In the first we search for the smallest distance among the distances $d_{\ell}=v-a.x$ for which there exists a curve at distance at most $d_{\ell}$ from~$s$; in the second we search for the smallest distance $d_r=a.x-v$ for which there exists a curve at distance at most $d_r$ from $s$; in the third we search for the smallest distance $\overline{d}_{\ell}=v-b.x$ for which there exists a curve at distance at most $\overline{d}_{\ell}$ from $s$; and in the fourth we search for the smallest distance $\overline{d}_r=b.x-v$ for which there exists a curve at distance at most $\overline{d}_r$ from $s$. We also perform 4 binary searches in the set of all $y$-coordinates of vertices of curves in $\C$, obtaining the candidates $d_b$, $d_t$, $\overline{d}_b$, and $\overline{d}_t$. We then return the distance $d^* = \min \{d_{\ell},d_r,\overline{d}_{\ell},\overline{d}_r,d_b,d_u,\overline{d}_b,\overline{d}_u\}$.

\begin{theorem}
	Given a set $\C$ of $n$ curves, each of size $m$, one can construct a search structure of size $O(nm \log^7 (nm))$ for segment nearest-curve queries. Given a query segment $s$, one can find in $O(\log^8 (nm))$ time the curve $C \in \C$ and distance $d^*$ such that $\dfd(s,C) = d^*$ and~$d^* \le \dfd(s,C')$ for all $C' \in \C$, under the $L_\infty$ metric.
\end{theorem}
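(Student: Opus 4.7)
The plan is to break the proof into three parts: correctness of the decision procedure for a fixed $d$, analysis of the size and query time of the multi-level tree, and the optimization to find $d^*$.

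First, I would verify correctness by unpacking \Cref{obs:linfty}: $\dfd(s,C_j)\le d$ iff there exists $i$ such that $a\in R^j_i(d)$ and $b\in \overline{R}^j_i(d)$. The containment $a\in R^j_i(d)$ is the conjunction of four inequalities, namely $p^i_\ell.x - a.x \le d$, $a.x - p^i_r.x \le d$, $p^i_b.y - a.y \le d$, $a.y - p^i_t.y \le d$, and similarly four more inequalities for $b\in \overline{R}^j_i(d)$. Thus the decision query is an orthogonal $8$-dimensional range search over the $O(nm)$ ``tuples'' $(p^i_\ell,p^i_r,p^i_b,p^i_t,\overline{p}^i_\ell,\overline{p}^i_r,\overline{p}^i_b,\overline{p}^i_t)$, indexed by $(j,i)$. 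These tuples can be precomputed in total time $O(nm)$ by sweeping each curve from both ends and maintaining running bounding-box extrema.

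Second, I would invoke the standard multi-level orthogonal range tree analysis. The structure described in the section is exactly such a tree over $O(nm)$ points in eight dimensions, with one level per inequality. The well-known recurrence yields space $O(nm\log^{7}(nm))$ and, for any fixed $d$, a decision query that visits $O(\log^{8}(nm))$ canonical subsets, each stored with a minimum (and an arg-min curve index) for reporting. This establishes both the size bound and the decision-query time.

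Third, for finding $d^*$, I would argue that at the optimum one of the eight inequalities is tight (otherwise both rectangles could shrink without losing $a$ or $b$), so $d^*$ equals one of eight coordinate differences $|v-q|$, where $v$ ranges over the $O(nm)$ $x$- or $y$-coordinates of vertices of $\C$ and $q\in\{a.x,a.y,b.x,b.y\}$. The most direct implementation performs eight binary searches (one per case) over sorted coordinate lists, each step resolved by a decision query, returning the smallest feasible candidate in each case and taking the minimum. The main obstacle will be shaving one $\log$ factor to match the claimed $O(\log^{8}(nm))$ bound, since the naive product is $O(\log^{9}(nm))$; I would apply parametric search (Cole's variant) to the decision procedure, whose comparisons are all of the form ``is $d$ above or below a fixed coordinate difference,'' and whose monotonicity in $d$ makes it directly parallelizable. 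This replaces the outer $\log(nm)$ overhead by $O(1)$ and yields the stated $O(\log^{8}(nm))$ query time. The curve $C$ achieving $d^*$ is recovered from the arg-min stored in the canonical subset that witnesses feasibility at $d^*$ in the final decision query.
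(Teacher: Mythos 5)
Your proposal matches the paper's proof essentially step for step: the same eight-level multi-level search tree over the $O(nm)$ tuples of extreme vertices defining $R_i(d)$ and $\overline{R}_i(d)$, the same shifted-comparison decision query, and the same characterization of $d^*$ via a tight side of one of the two rectangles, followed by eight binary searches over coordinate differences resolved by decision queries. The one place you go beyond the paper is the final log factor: the paper performs the eight binary searches naively (which, as you correctly observe, yields $O(\log^9(nm))$) and states $O(\log^8(nm))$ without comment, whereas you explicitly flag the discrepancy and patch it with Cole's parametric search --- a legitimate fix, though a lighter-weight alternative is to fold each binary search into the descent of the corresponding first-level tree, so that each of the $O(\log(nm))$ candidate values of $d$ is tested by a query into the remaining seven levels only.
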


\subsection{Input is a set of segments}
\label{sub:linfty:seg}
Let $\S = \{s_1,\dotsc,s_n\}$ be the input set of segments. %
Given a query curve $Q=(p_1,\dotsc,p_m)$, the task is to find a segment $s = ab \in \S$ such that $\dfd(Q,s) = \min_{s' \in \S} \dfd(Q,s')$, after suitably preprocessing $\S$.  We use an overall approach similar to that used in \Cref{sub:linfty:curve}, however the details of the implementation of the data structure and algorithm differ.

\paragraph*{The data structure.}
Preprocess the input $\S$ into a four-level search structure~$\T$ consisting of a two-dimensional range tree containing the endpoints $a$, and where the associated structure for each node in the second level of the tree is another two-dimensional range tree containing the endpoints $b$ corresponding to the points in the canonical subset of the node.

This structure answers queries consisting of a pair of two-dimensional ranges (i.e., rectangles) $(R,\overline{R})$ and returns all segments $s = ab$ such that $a \in R$ and $b \in \overline{R}$.
The preprocessing time for the structure is $O(n \log^4{n})$, and the storage is $O(n \log^3{n})$. %
Querying the structure with two rectangles requires $O(\log^3{n})$ time, by applying fractional cascading~\cite{Willard85}.

\paragraph*{The query algorithm.}
Consider the decision version of the problem where, given a query curve $Q$ and a distance $d$, the objective is to determine if there exists a segment $s\in\S$ with~$\dfd(s,Q)\le d$. 
\Cref{obs:linfty} implies that it is sufficient to query the search structure $\T$ with the pair of rectangles $(R_i(d),\overline{R}_i(d))$ of the curve $Q$, for all $1\le i\le m-1$.
If $\T$ returns at least one segment for any of the partitions, then this segment is within distance $d$ of $Q$.

As we traverse the curve $Q$ left-to-right, the bounding box of $Q[1,i]$ can be computed at constant incremental cost.  For a fixed~$d>0$, each rectangle~$R_i(d)$ can be constructed from the corresponding bounding box in constant time.  
Rectangle~$\overline{R}_i(d)$ can be handled similarly by a reverse traversal. Hence all the rectangles can be computed in time $O(m)$, for a fixed~$d$.
Each pair of rectangles requires a query in~$\T$, and thus the time required to answer the decision problem is $O(m\log^3{n})$.

\paragraph*{Finding the closest segment.}

In order to determine the nearest segment $s$ to $Q$, we claim, using an argument similar to that in \Cref{sub:linfty:curve}, for a segment $s = ab$ of distance $d^*$ from $Q$ that either $a$ lies on the boundary of $R_i(d^*)$ or $b$ lies on the boundary of $\overline{R}_i(d^*)$ for some~$1\leq i < m$.
Thus, in order to determine the value of $d^*$ it suffices to search over all $8m$ pairs of rectangles where either $a$ or $b$ lies on one of the eight sides of the obtained query rectangles.
The sorted list of candidate values of $d$ for each side can be computed in $O(n)$ time from a sorted list of the corresponding $x$- or $y$-coordinates of $a$ or $b$.
The smallest value of $d$ for each side is then obtained by a binary search of the sorted list of candidate values. 
For each of the $O(\log{n})$ evaluated values $d$, a call to $\T$ decides on the existence of a segment within $d$ of $Q$.

\begin{theorem}
    Given an input $\S$ of $n$ segments, a search structure can be preprocessed in~$O(n\log^4{n})$ time and requiring $O(n\log^3{n})$ storage that can answer the following.
    For a query curve $Q$ of $m$ vertices, find the segment $s^* \in \S$ and distance $d^*$ such that $\dfd(Q,s^*) = d^*$ and $\dfd(Q,s) \geq d^*$ for all $s \in \S$ under the $L_\infty$ metric.
    The time to answer the query is~$O(m\log^4{n})$.
\end{theorem}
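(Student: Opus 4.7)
The plan is to attack the theorem in three phases that mirror the discussion just above it: build the search structure $\T$, give a decision routine for a fixed threshold $d$, and wrap that routine in a small number of binary searches to recover $d^*$.

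For the structure, I would realize $\T$ as the four-level range tree described in the preceding paragraphs: the outer two levels form a $2$-dimensional range tree on the ``start endpoints'' $\{a:ab\in\S\}$, and at every canonical subset on the second level we attach a $2$-dimensional range tree on the matching ``end endpoints'' $\{b:ab\in\S\}$. Standard range-tree analysis yields $O(n\log^3 n)$ space and $O(n\log^4 n)$ preprocessing time; applying fractional cascading in the deepest level reduces a query with a fixed pair of axis-aligned rectangles $(R,\overline{R})$ to $O(\log^3 n)$, returning whether any $ab\in\S$ has $a\in R$ and $b\in\overline{R}$. During preprocessing I would also sort, once, the $n$ values of $a.x$, $a.y$, $b.x$, and $b.y$, so the query routine can reuse them.

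For the decision routine, a single left-to-right sweep of $Q$ maintains the axis-aligned bounding box of every prefix $Q[1,i]$ in $O(m)$ total time, and a symmetric right-to-left sweep produces the bounding boxes of every suffix; each rectangle $R_i(d)$ or $\overline{R}_i(d)$ is then recovered in $O(1)$ from the corresponding bounding box by inflating it by $d$ and intersecting. By \Cref{obs:linfty}, $\dfd(Q,s)\le d$ for some $s\in\S$ iff at least one of the $m-1$ queries on $(R_i(d),\overline{R}_i(d))$ returns a nonempty answer, so the decision procedure costs $O(m\log^3 n)$.

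To locate $d^*$ itself, let $s^*=a^*b^*$ be a nearest segment and $i^*$ its partition index. Minimality of $d^*$ forces one of $a^*,b^*$ to lie exactly on some side of its rectangle, since otherwise $d$ could be shrunk slightly while preserving both inclusions. That side is one of eight and its defining equation identifies $d^*$ as the difference between a prefix or suffix bounding-box coordinate of $Q$ (fixed once $i^*$ and the side type are fixed) and an endpoint coordinate of $s^*$. Enumerating the $8(m-1)$ pairs $(\text{side},i)$: for each pair, the $n$ candidate values of $d$ are obtained as $v - c$, where $v$ is the $Q$-coordinate determined by $(\text{side},i)$ and $c$ ranges over one of the four pre-sorted endpoint lists; this preserves (or reverses) the sort order, so no query-time sorting is required. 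I would binary-search each of these $8(m-1)$ implicit sorted lists, spending $O(\log n)$ probes per list and one $O(\log^3 n)$ query to $\T$ on the fixed rectangle pair $(R_i(d),\overline{R}_i(d))$ per probe. The smallest value that yields a ``yes'' across all the searches is $d^*$, and the segment reported by the corresponding query is $s^*$; the total query cost is $O(m\cdot\log n\cdot\log^3 n)=O(m\log^4 n)$.

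The step I expect to require the most care is the correctness of this last phase. Two claims must be verified: for each fixed $i$, the answer of a single query $(R_i(d),\overline{R}_i(d))$ is monotone in $d$ (immediate, since $R_i(d)\subseteq R_i(d')$ and $\overline{R}_i(d)\subseteq \overline{R}_i(d')$ whenever $d\le d'$), and the $8(m-1)$ candidate families are jointly complete, meaning $d^*$ is attained somewhere in their union. Both are consequences of the boundary-contact argument and together certify that the minimum over the $8(m-1)$ binary searches is exactly $d^*$.
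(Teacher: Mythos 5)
Your proposal is correct and follows essentially the same route as the paper: the identical four-level range tree on the pairs $(a,b)$ with $O(n\log^3 n)$ space and $O(\log^3 n)$ per rectangle-pair query, the $O(m)$ prefix/suffix bounding-box sweep for the decision version, and the boundary-contact argument reducing $d^*$ to $O(m)$ binary searches over implicitly sorted candidate lists, each probe resolved by a single query to $\T$. Your explicit verification of monotonicity in $d$ and completeness of the eight side families makes precise what the paper leaves implicit, but the underlying argument is the same.
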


\section{\nn{} under translation and \texorpdfstring{$L_{\infty}$}{L-infinity} metric}
\label{sec:translation}
An analogous approach yields algorithms with similar running times for the problems under translation.

For a curve $C$ and a translation $t$, let $C_t$ be the curve obtained by translating $C$ by $t$, i.e., by translating each of the vertices of $C$ by $t$.
In this section we study the two problems studied in \Cref{sec:linfty}, assuming the input curves are given up to translation. That is, the distance between the query curve $Q$ and an input curve $C$ is now $\min_t \dfd(Q,C_t)$, where the discrete \frechet\ distance is computed using the $L_\infty$ metric.

\subsection{Query is a segment}
\label{sub:translation:curve}

Let $\C = \{C_1,\dotsc,C_n\}$ be the set of input curves, each of size $m$. We need to preprocess~$\C$ for segment nearest-neighbor queries under translation, that is,
given a query segment $s=ab$,
find the curve $C \in C$ that minimizes $\min_t \dfd(s,C'_t) = \min_t \dfd(s_t,C')$,
where $s_t$ and $C_t$ are the images of $s$ and $C$, respectively, under the translation $t$.
Let $t^*$ be the translation that minimizes $\dfd(s_t, C)$,
and set $d^* = \dfd(s_{t^*}, C)$. Consider the partition of $C=(p_1,\ldots,p_m)$ into prefix $C[1,i]$ and suffix $C[i+1,m]$, such that $a_{t^*} \in R_i(d^*)$ and $b_{t^*} \in \overline{R}_i(d^*)$. The following trivial observation allows us to construct a set of values to which $d^*$ must belong.   

\begin{observation}
	One of the following statements holds: %
	\begin{enumerate}
		\item
		$a_{t^*}$ lies on the left side of $R_i(d^*)$ and $b_{t^*}$ lies on the right side of $\overline{R}_i(d^*)$, or vice versa,
		i.e., $a_{t^*}$ lies on the right side of $R_i(d^*)$ and $b_{t^*}$ lies on the left side of $\overline{R}_i(d^*)$.
		\item
		$a_{t^*}$ lies on the bottom side of $R_i(d^*)$ and $b_{t^*}$ lies on the top side of $\overline{R}_i(d^*)$, or vice versa.
	\end{enumerate}
\end{observation}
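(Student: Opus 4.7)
The plan is to argue by contradiction, leveraging the minimality of $d^*$. I will suppose that $d^* = \dfd(s_{t^*}, C)$ is realized by partition index $i$ but that neither case~(1) nor case~(2) of the observation holds, and then construct a small perturbation $\Delta \in \mathbb{R}^2$ of the translation and some $\eps > 0$ for which $a_{t^* + \Delta} \in R_i(d^* - \eps)$ and $b_{t^* + \Delta} \in \overline{R}_i(d^* - \eps)$. By \Cref{obs:linfty} applied to $s_{t^* + \Delta}$ and $C$, this would give $\dfd(s_{t^* + \Delta}, C) \le d^* - \eps$, contradicting the optimality of $t^*$.

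To build such a perturbation, I will use that the rectangle $R_i(d)$ (and similarly $\overline{R}_i(d)$) contracts by exactly $\eps$ on each of its four sides as $d$ decreases to $d^* - \eps$, since every bounding side comes from a single square $S(p, d)$. Define the nonnegative $L_\infty$-slacks $A_L, A_R, A_B, A_T$ from $a_{t^*}$ to the four sides of $R_i(d^*)$, and analogously $B_L, B_R, B_B, B_T$ for $b_{t^*}$ and $\overline{R}_i(d^*)$. The two membership conditions then decouple coordinatewise into four linear inequalities in $\Delta_x$ and four in $\Delta_y$; a short calculation shows that a compatible value of $\Delta_x$ exists for some $\eps > 0$ iff
\[\min(A_L, B_L) + \min(A_R, B_R) > 0,\]
and analogously for $\Delta_y$ in terms of the vertical slacks.

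The final step is to link these slack inequalities to the negation of the observation. Failure of case~(1) amounts to saying that $(A_L, B_R) \neq (0,0)$ and $(A_R, B_L) \neq (0,0)$. Assuming generically that both $R_i(d^*)$ and $\overline{R}_i(d^*)$ have positive width, an easy case analysis (splitting on which of $A_L, A_R, B_L, B_R$ vanish) shows this forces the horizontal inequality above. By symmetry, failure of case~(2) forces the vertical inequality. Hence if both cases fail, both inequalities hold, a feasible pair $(\Delta, \eps)$ exists, and the contradiction follows.

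The main obstacle I anticipate is handling degenerate rectangles of zero width or zero height, where the slack-based case analysis can fail to produce the needed strict inequality. These I would handle by interpreting the observation with the left/right (or top/bottom) sides of a degenerate rectangle regarded as coincident, so that any point in it lies on both simultaneously; under this convention the observation holds trivially in the degenerate regime, while the non-degenerate regime is covered by the argument above.
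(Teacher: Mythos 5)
The paper never actually proves this observation---it is introduced as ``trivial''---so the comparison is between your argument and the implicit one the authors had in mind. Your generic argument is that implicit argument made precise, and it is correct: the two membership constraints decouple into the $x$- and $y$-coordinates; each side of $R_i(d)$ and $\overline{R}_i(d)$ is contributed by a fixed vertex and moves inward at unit speed as $d$ decreases; and a feasible $(\Delta,\eps)$ with $\eps>0$ exists unless $\min(A_L,B_L)+\min(A_R,B_R)=0$ and the analogous vertical identity holds. Under the assumption that both rectangles have positive width, the only way the horizontal sum can vanish is $(A_L,B_R)=(0,0)$ or $(A_R,B_L)=(0,0)$, which is exactly alternative~(1); likewise for heights and alternative~(2). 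That part is sound.

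The gap is the degenerate case, and your patch does not close it; in fact the observation as stated is false there, so no labeling convention can rescue it. If $R_i(d^*)$ has zero width (i.e., $d^*=d_i$, half the $x$-extent of the prefix bounding box), then $A_L=A_R=0$ and the horizontal perturbation is blocked no matter where $b_{t^*}$ sits; your convention makes $a_{t^*}$ lie on both the left and right sides of $R_i(d^*)$, but alternative~(1) \emph{additionally} requires $b_{t^*}$ to lie on the left or right side of $\overline{R}_i(d^*)$, and nothing forces that. Concretely, take the single curve $C=((0,0),(-10,0),(0,0))$ and the segment $s=ab$ with $a=(0,0)$, $b=(5,0)$. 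The optimum is $d^*=5$, attained only at the partition $i=2$ with $a_{t^*}=(-5,0)$ on the degenerate prefix rectangle $\{-5\}\times[-5,5]$ and $b_{t^*}=(0,0)$ strictly interior to $\overline{R}_2(5)=[-5,5]^2$ (partition $i=1$ requires $d\ge 7.5$). Neither alternative of the observation holds. A correct statement needs a third alternative---$d^*=d_i$ or $d^*=\overline{d}_i$ for the relevant $i$---and a correct proof must branch on whether the blocking coordinate constraint comes from a single degenerate rectangle or from an opposing pair of sides distributed over the two rectangles. Your generic analysis is fine; the claim that the degenerate regime is ``trivially'' handled is the genuine gap, and it is precisely the regime that also escapes the candidate set of values the paper's binary search is built on.
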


Assume without loss of generality that $a.x < b.x$ and $a.y < b.y$ and that the first statement holds. Let $\delta_x = b.x - a.x$ denote the $x$-span of $s$, and let $\delta_y$ denote the $y$-span of $s$. Then, either (i) $(\overline{p}^i_r.x + d^*) - (p^i_l.x - d^*) = \delta_x$, or (ii) $(\overline{p}^i_l.x - d^*) - (p^i_r.x + d^*) = \delta_x$, where as before $p^i_l$ ($p^i_r$) is the vertex of $C$ which determines the left (right) side of $R_i$ and $\overline{p}^i_l$ ($\overline{p}^i_r$) is the vertex of $C$ which determines the left (right) side of $\overline{R}_i$. That is, either (i)~$d^* = \frac{\delta_x - (\overline{p}^i_r.x - p^i_l.x)}{2}$, or (ii)~$d^* = \frac{(\overline{p}^i_l.x - p^i_r.x) - \delta_x}{2}$. 

\paragraph*{The data structure.}
Consider the decision version of the problem: Given $d$, is there a curve in~$\C$ whose distance from $s$ under translation is at most $d$? 
We now present a five-level data structure to answer such decision queries.
We continue to assume that $a.x < b.x$ and $a.y < b.y$.  For a curve $C_j$, let $d^j_i$ ($\overline{d}^j_i$) be the smallest radius such that
$R^j_i$ ($\overline{R}^j_i$) is non-empty, and set $r^j_i = \max\{d^j_i, \overline{d}^j_i\}$. The top level of the structure is simply a binary search tree on the $n(m-1)$ values $r^j_i$; it serves to locate 
the pairs $(R^j_i(d), \overline{R}^j_i(d))$ in which both rectangles are non-empty. The role of the remaining four levels is to filter the set of relevant pairs, so that at the bottom level we remain with those pairs for which $s$ can be translated so that $a$ is in the first rectangle and $b$ is in the second. 

For each node $v$ in the top level tree, we construct a search tree over the values $\overline{p}^i_r.x - p^i_l.x$ corresponding to the pairs in the canonical subset of $v$. These trees constitute the second level of the structure. The third-level trees are search trees over the values $\overline{p}^i_l.x - p^i_r.x$, the fourth-level ones~--- over the values $\overline{p}^i_t.y - p^i_b.y$, and finally the fifth-level ones~--- over the  values $\overline{p}^i_b.y - p^i_t.y$.

\paragraph*{The query algorithm.}
Given a query segment $s=ab$ (with $a.x < b.x$ and $a.y < b.y$) and $d > 0$, we employ our data structure to answer the decision problem. In the top level, we select all pairs $(R^j_i, \overline{R}^j_i)$ satisfying $r^j_i \le d$. Of these pairs, in the second level, we select all pairs satisfying $\overline{p}^i_r.x - p^i_l.x \ge \delta_x - 2d$. In the third level, we select all pairs satisfying $\overline{p}^i_l.x - p^i_r.x \le \delta_x + 2d$. Similarly, in the fourth level, we select all pairs satisfying $\overline{p}^i_t.y - p^i_b.y \ge \delta_y - 2d$, and in the fifth level, we select all pairs satisfying $\overline{p}^i_b.y - p^i_t.y \le \delta_y + 2d$. At this point, if our current set of pairs is non-empty, we return \textsc{yes}, otherwise, we return \textsc{no}.

To find the nearest curve $C$ and the corresponding distance $d^*$, we proceed as follows, utilizing the observation above. We perform a binary search over the $O(nm)$ values of the form $\overline{p}^i_r.x - p^i_l.x$ to find the largest value for which the decision algorithm returns \textsc{yes} on $d = \frac{\delta_x - (\overline{p}^i_r.x - p^i_l.x)}{2}$. (We only consider the values $\overline{p}^i_r.x - p^i_l.x$ that are smaller than $\delta_x$.) Similarly, we perform a binary search over the values $\overline{p}^i_t.y - p^i_b.y$ to find the largest value for which the decision algorithm returns \textsc{yes} on $d = \frac{\delta_y - (\overline{p}^i_t.y - p^i_b.y)}{2}$. We perform two more binary searches; one over the values $\overline{p}^i_l.x - p^i_r.x$ to find the smallest value for which the decision algorithms returns \textsc{yes} on $d = \frac{(\overline{p}^i_l.x - p^i_r.x) - \delta_x}{2}$, and one over the values $\overline{p}^i_b.y - p^i_t.y$. Finally, we return the smallest $d$ for which the decision algorithm has returned \textsc{yes}.  

Our data structure was designed for the case where $b$ lies to the right and above $a$. Symmetric data structures for the other three cases are also needed. The following theorem summarizes the main result of this section.

\begin{restatable}{theorem}{thmtranslationquery}
	Given a set $\C$ of $n$ curves, each of size $m$, one can construct a search structure of size~$O(nm \log^4 (nm))$, such that, given a query segment $s$, one can find in~$O(\log^6(nm))$~time the curve $C \in C$ nearest to $s$ under translation,
	that is the curve minimizing $\min_t \dfd(s_t, C')$,
	where the discrete \frechet\ distance is computed using the $L_\infty$ metric. 
\end{restatable}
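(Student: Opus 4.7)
The plan is to bind together the five-level data structure, the decision procedure, and the family of binary searches sketched in the text above, and then carry out the routine analysis. I would proceed in three steps: size, correctness of the decision procedure, and reduction of optimization to binary search.

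First, the size bound. The structure is a five-level iterated range tree storing the $n(m-1)$ prefix/suffix pairs indexed by $(j,i)$, keyed level-by-level on the scalars $r^j_i$, $\overline{p}^i_r.x - p^i_l.x$, $\overline{p}^i_l.x - p^i_r.x$, $\overline{p}^i_t.y - p^i_b.y$, and $\overline{p}^i_b.y - p^i_t.y$. A standard multilevel range-tree analysis adds one logarithmic factor per level beyond the first, so total storage is $O(nm \log^4(nm))$. Maintaining four symmetric copies (one per sign pattern of $(b.x-a.x,\, b.y-a.y)$) only affects constants.

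Second, correctness of the decision procedure. Assuming without loss of generality $a.x<b.x$ and $a.y<b.y$, \Cref{obs:linfty} applied to the translated segment $s_t$ says $\dfd(s_t, C_j)\le d$ iff some index $i$ admits $a_t \in R^j_i(d)$ and $b_t \in \overline{R}^j_i(d)$. Because $b_t-a_t = b-a$ is constant across translations, the $x$- and $y$-projections decouple: a valid translation exists iff the horizontal feasibility interval $[\max(p^i_l.x-d,\, \overline{p}^i_l.x-d-\delta_x),\; \min(p^i_r.x+d,\, \overline{p}^i_r.x+d-\delta_x)]$ and its vertical analogue are both non-empty. A short case check reduces non-emptiness of these intervals precisely to the four inequalities on attributes (ii)--(v), with the top-level condition $r^j_i\le d$ encoding non-emptiness of $R^j_i(d)$ and $\overline{R}^j_i(d)$ themselves. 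Thus the decision problem is solved by a single traversal of the five-level structure in $O(\log^5(nm))$ time.

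Third, computing $d^*$. The shrinking argument of \Cref{sub:linfty:curve} extends to the translational setting: at the optimum, either $a_{t^*}$ lies on a side of $R_i(d^*)$ or $b_{t^*}$ lies on a side of $\overline{R}_i(d^*)$, else both rectangles could be shrunk. Together with the fact that the \emph{other} coordinate difference between $a_{t^*}$ and $b_{t^*}$ is free, this pins $d^*$ to one of four explicit expressions in a coordinate difference, each taking $O(nm)$ discrete values as $(j,i)$ ranges. Since feasibility is monotone in $d$, each candidate is minimized by binary search using the decision structure as comparator, for $4\cdot O(\log(nm))$ decision calls and $O(\log^6(nm))$ total query time; returning the minimum over the four candidates (and the four sign-pattern copies) yields the nearest curve. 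The main obstacle will be the tightness observation and, within each sign pattern, carefully matching the correct coordinate difference to the side on which $a_{t^*}$ or $b_{t^*}$ lies, so that the binary search truly brackets $d^*$ rather than a spurious side contact; the enumeration is finite and each case is a short geometric argument, after which the range-tree analysis is entirely standard.
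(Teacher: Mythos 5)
Your proposal follows essentially the same route as the paper: the identical five-level structure keyed on $r^j_i$ and the four coordinate-difference values, the same four threshold tests in the decision procedure (your interval-intersection derivation of those tests is in fact a cleaner justification than the paper spells out), and the same reduction of the optimization to four binary searches over $O(nm)$ candidate values via the tightness observation. The one point to watch---which the paper itself also glosses over---is the degenerate case where $d^*$ equals some $r^j_i$ (a prefix or suffix rectangle collapsing to a point) rather than one of the four coordinate-difference expressions, so strictly one should include those $O(nm)$ values as a fifth family of candidates.
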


\subsection{Input is a set of segments}
\label{sub:translation:seg}
Let $\S = \{s_1,\dotsc,s_n\}$ be the input set of segments, with $s_i = a_{i}b_{i}$.
We need to preprocess~$\S$ for nearest-neighbor queries under translation, that is,
given a query curve $Q=(p_1,\dots,p_m)$, find the segment $s = ab \in \S$ that minimizes
$\min_t \dfd(Q,s'_t) = \min_t \dfd(Q_t,s')$. 
Since translations are allowed, without loss of generality
we can assume that the first point of all the segments is the origin. In other words,
the input is converted to a two-dimensional point set $\C=\{c_i = b_i-a_i  \mid a_{i}b_{i}\in \S\}$.

The idea is to find the nearest segment corresponding to each of the $m-1$ partitions of the query.
Let $s=ab$ be any segment and $d$ some radius. The following observation holds for any partition of $Q$ into $Q[1, i]$ and $Q[i+1,m]$, where $R_i^\oplus(d) = ( -R_i(d)) \oplus \overline{R}_i(d)$ and $\oplus$ is the Minkowski sum operator, see \Cref{fig:minkowski}.

\begin{figure}[h]
	\begin{center}
		\includegraphics[scale=1.1]{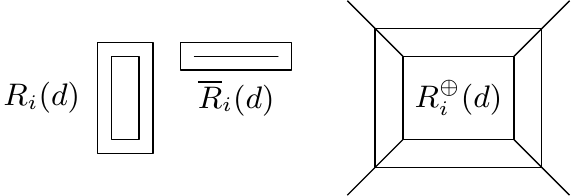}
	\end{center}
	\caption{\label{fig:minkowski} The rectangle $R_i^\oplus(d)$, as $d$ increases.}
\end{figure}

\begin{observation} \label{obs:seg:location}
	There exists a translation $t$ such that $a_t \in R_i(d)$ and $b_t \in \overline{R}_i(d)$ if and only if $c=b-a \in R_i^\oplus(d)$.
\end{observation}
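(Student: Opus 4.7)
The plan is to prove Observation~\ref{obs:seg:location} by a direct unpacking of the definition of the Minkowski sum. Both directions amount to the simple algebraic identity $(b+t)-(a+t) = b-a$ combined with the fact that $(-X)\oplus Y = \{y-x : x\in X,\, y\in Y\}$.

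First I would handle the forward direction. Assume a translation $t$ exists with $a_t = a+t \in R_i(d)$ and $b_t = b+t \in \overline{R}_i(d)$. Writing $c = b-a = (b+t) - (a+t)$, I identify the summand $-(a+t) \in -R_i(d)$ and the summand $(b+t) \in \overline{R}_i(d)$, so $c$ is a sum of a point in $-R_i(d)$ and a point in $\overline{R}_i(d)$, i.e., $c \in R_i^\oplus(d)$.

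For the reverse direction, suppose $c = b-a \in R_i^\oplus(d)$. By definition of the Minkowski sum, write $c = (-r) + q$ with $r \in R_i(d)$ and $q \in \overline{R}_i(d)$. The key observation is that the equation $-r + q = b - a$ can be rewritten as $r - a = q - b$, so setting $t := r - a = q - b$ yields a single translation that sends $a$ into $R_i(d)$ (since $a + t = r$) and $b$ into $\overline{R}_i(d)$ (since $b + t = q$), as required.

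There is essentially no obstacle: the proof is a two-line Minkowski-sum argument that only uses the definitions of $R_i(d)$, $\overline{R}_i(d)$, and $R_i^\oplus(d)$, together with the translation-invariance of vector differences. The only thing worth spelling out explicitly in the final write-up is that $-R_i(d)$ denotes the pointwise reflection $\{-x : x \in R_i(d)\}$ and that $(-X) \oplus Y = \{y - x : x \in X,\, y \in Y\}$, so that membership in $R_i^\oplus(d)$ is precisely the condition ``$c$ can be written as a difference of a point of $\overline{R}_i(d)$ and a point of $R_i(d)$.''
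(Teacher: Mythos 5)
Your proof is correct and is exactly the direct Minkowski-sum unpacking that the paper intends; the paper states this as an observation without proof, treating it as immediate from the definition $R_i^\oplus(d) = (-R_i(d)) \oplus \overline{R}_i(d)$. Your write-up simply supplies the two-line verification (including the key rewriting $r - a = q - b$ in the reverse direction) that the authors leave implicit.
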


Based on this observation segment $ab$ is within distance $d$ of $Q$ under translation, if for some $i$, $R_i^\oplus(d)$ contains the point $c=b-a$, which means translations can be handled implicitly.

\paragraph*{The data structure.}
According to \Cref{obs:seg:location}, a data structure is required to answer the following question: Given a partition of $Q$ into prefix $Q[1,i]$ and suffix $Q[i+1,m]$, what is the smallest radius $d^*$ so that $R_i^\oplus(d^*)$ contains some $c_j \in \C$?
The smallest radius $d^\prime$ where both $R_i(d^\prime)$ and $\overline{R}_i(d^\prime)$---and hence $R_i^\oplus(d^\prime)$---are nonempty can be determined in linear time. This value which depends on $i$ is a lower bound on $d^*$.

Since $-R_i(d^\prime)$ and $\overline{R}_i(d^\prime)$ are both axis-aligned rectangles (segments or points in special cases), their Minkowski sum, $R_i^\oplus(d^\prime)$, is also a possibly degenerate axis-aligned rectangle. %
If this rectangle contains some point $c_j\in \C$, then $s_j$ is the nearest segment with respect to this partition and the optimal distance is $d^\prime$. If it contains more than one point from $\C$, then all the corresponding segments are equidistant from the query and each of them can be reported as the nearest neighbor corresponding to this partition. The data structure needed here is a two-dimensional range tree on $\C$.

If $R_i^\oplus(d^\prime) \cap \C$ is empty, then we need to find the smallest radius $d^*$ so that $R_i^\oplus(d^*)$ contains some $c_j$.
For any distance $d>d^\prime$, $R_i^\oplus(d)$ is a rectangle concentric with $R_i^\oplus(d^\prime)$ but whose edges are longer by an additive amount of $4(d-d^\prime)$. 

As $d$ increases, the four edges of the rectangle sweep through 4 non-overlapping regions in the plane, so any point in the plane that gets covered by $R_i^\oplus(d)$, first appears on some edge.
We divide this problem into 4 sub-problems based on the edge that the optimal $c_j$ might appear on. Below, we solve the sub-problem for the right edge of the rectangle: Given a partition of $Q$ into prefix $Q[1,i]$ and suffix $Q[i+1,m]$, what is the smallest radius $d_r^*$ so that the right edge of $R_i^\oplus(d_r^*)$ contains some $c_j$? All other sub-problems are solved symmetrically.

Any point $c_j$ that appears on the right edge belongs to the intersection of three half-planes:
\begin{enumerate}
	\item On or below the line of slope $+1$ passing through the top-right corner of the rectangle $R_i^\oplus(d')$.
	\item On or above the line of slope $-1$ passing through the bottom-right corner of $R_i^\oplus(d')$.
	\item To the right of the line through the right edge of $R_i^\oplus(d')$.
\end{enumerate}

The first point in this region %
swept by the right edge of the growing rectangle $R_i^\oplus(d)$ is the one with
the smallest $x$-coordinate. This point can be located using a three-dimensional range tree on $\C$.

\paragraph*{The query algorithm.}

Given a query curve $Q=(p_1,\dotsc,p_m)$, the nearest segment under translation can be determined by using the data structure to find the nearest segment---and its distance from $Q$---for each of the $m-1$ partitions and selecting the segment whose distance is smallest.

As stated in \Cref{sub:linfty:seg}, all $O(m)$ bounding boxes can be computed in $O(m)$ total time. For a particular partition, knowing the two bounding boxes, one can determine the smallest radius $d^\prime$ where $R_i^\oplus(d^\prime)$ is nonempty in constant time.
Now the two-dimensional range tree on $\C$ is used to search for points inside $R_i^\oplus(d^\prime)$. If the data structure returns some point $c\in \C$, then the segment corresponding to $c$ is the nearest segment under translation. Otherwise, one has to do four three-level range searches in the second data structure and compare the results to find the nearest segment. This is the most expensive step which takes $O(\log^2{n})$ time using fractional cascading~\cite{Willard85}. The following theorem summarizes the main result of this section.

\begin{restatable}{theorem}{thmtranslationinput}
	Given a set $\S$ of $n$ segments, one can construct a search structure of size~$O(n \log^2 {n})$, so that, given a query curve $Q$ of size $m$, one can find in~$O(m\log^2{n})$ time the segment $s \in \S$ nearest to $Q$ under translation,
	that is the segment minimizing $\min_t \dfd(Q, s'_t)$,
	where the discrete \frechet\ distance is computed using the $L_\infty$ metric. 
\end{restatable}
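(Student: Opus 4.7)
The plan is to exploit the normalization trick already set up in the text: since only the endpoint-difference of each input segment matters under translation, reduce $\S$ to the planar point set $\C = \{b_i - a_i : a_i b_i \in \S\}$, and, via \Cref{obs:seg:location}, replace the question ``is some segment within discrete \frechet{} distance $d$ of $Q$ under translation, using partition index $i$?'' by ``does the axis-aligned rectangle $R_i^\oplus(d)$ contain a point of $\C$?'' The nearest segment under translation then corresponds to the overall minimum, over $i \in \{1, \dots, m-1\}$, of the smallest $d_i^{*}$ for which $R_i^\oplus(d_i^{*}) \cap \C \neq \emptyset$.

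For preprocessing, I would build two families of structures on $\C$. The first is a standard two-dimensional range tree, of size $O(n \log n)$, supporting orthogonal range emptiness (and single-point reporting) in $O(\log^2 n)$ time; this handles the case where the critical radius equals the base radius $d_i'$ at which $R_i^\oplus$ first becomes non-empty. The second family consists of four three-dimensional range trees, one per edge direction. For the right-edge subproblem, observe that as $d$ grows past $d_i'$ the right edge sweeps a wedge bounded by the slope-$+1$ line through the top-right corner, the slope-$-1$ line through the bottom-right corner, and the vertical line through the right edge of $R_i^\oplus(d_i')$; a point $c=(c_x,c_y)$ lies in this wedge iff $c_x \ge x_r(d_i')$, $c_y - c_x \le y_t(d_i') - x_r(d_i')$, and $c_y + c_x \ge y_b(d_i') + x_r(d_i')$, and the first such $c$ to be swept is the one with minimum $c_x$. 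A range tree on $\C$ with the three dimensions $(c_x,\,c_y-c_x,\,c_y+c_x)$, augmented to return a point of minimum $c_x$ in the query range, supports this in $O(n \log^2 n)$ space and $O(\log^2 n)$ time per query, using fractional cascading~\cite{Willard85}. The left-, top-, and bottom-edge subproblems are symmetric. Total space is $O(n \log^2 n)$.

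For the query algorithm, I would first compute, in $O(m)$ total time by one forward and one backward sweep over $Q$, the bounding boxes of all prefixes $Q[1,i]$ and suffixes $Q[i+1,m]$; from these, each $R_i(d)$, $\overline{R}_i(d)$, and hence $R_i^\oplus(d)$, is available in $O(1)$ time given $d$, and in particular $d_i'$ itself is immediate. For each $i$, I query the $2$D range tree on $R_i^\oplus(d_i')$; if a point is reported, the segment corresponding to it is within distance $d_i'$ of $Q$ and that is the optimum for partition $i$. Otherwise, I issue four $3$D queries, one per edge, each returning the smallest $d > d_i'$ at which the corresponding edge meets a point of $\C$, and take the minimum. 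Across all $m-1$ partitions this yields $O(m \log^2 n)$ query time, and the overall minimum over $i$ identifies the nearest segment.

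The main obstacle I would focus verification on is the correctness of the three-halfplane characterization of each edge subproblem—specifically that the wedges swept by the four edges of $R_i^\oplus(d)$ as $d$ grows from $d_i'$ are exactly the four regions described above, and that ``first time an edge meets a point'' coincides, for the right edge, with minimizing $c_x$ under the three constraints (and symmetrically for the others). Once this geometric claim is pinned down, the rest is off-the-shelf range-search machinery, and the stated $O(n \log^2 n)$ space and $O(m \log^2 n)$ query bounds follow directly.
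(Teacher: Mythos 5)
Your proposal is correct and follows essentially the same route as the paper: the reduction to the point set $\C=\{b_i-a_i\}$ via \Cref{obs:seg:location}, the two-dimensional range tree for the base radius $d_i'$, and the four three-dimensional range trees (one per edge of the growing rectangle $R_i^\oplus(d)$, each characterized by three half-planes with the extreme coordinate minimized) are exactly the paper's construction, with the same $O(n\log^2 n)$ space and $O(m\log^2 n)$ query bounds. The geometric claim you flag for verification is precisely what the paper asserts, and it holds because the corners of $R_i^\oplus(d)$ move along lines of slope $\pm 1$ as $d$ grows, so each edge sweeps the stated wedge monotonically.
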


\section{\nn{} and \texorpdfstring{$L_2$}{L-2} metric}
\label{sec:l2}

In this section, we present algorithms for approximate nearest-neighbor search under the discrete \frechet\ distance using $L_2$. 
Notice that the algorithms from \Cref{sec:linfty} for the  $L_\infty$ version of the problem, already give $\sqrt{2}$-approximation algorithms for the $L_2$ version.  Next, we provide $(1+\eps)$-approximation algorithms.

\subsection{Query is a segment}
\label{sub:l2:curve}
Let $\C=\{C_1,\ldots,C_n\}$ be a set of $n$ polygonal curves in the plane. The $(1+\eps)$-approximate nearest-neighbor problem is defined as follows: Given $0<\eps\le 1$, preprocess $\C$ into a data structure supporting queries of the following type: given a query segment $s$, return a curve~$C' \in \C$, such that $\dfd(s,C') \le (1+\eps)\dfd(s,C)$, where $C$ is the curve in $\C$ closest to $s$.

Here we provide a data structure for the $(1+\eps,r)$-approximate nearest-neighbor problem, defined as:
Given a parameter $r$ and $0<\eps\le 1$, preprocess $\C$ into a data structure supporting queries of the following type: given a query segment $s$, if there exists a curve $C_i\in \C$ such that $\dfd(s,C_i)\le r$, then return a curve $C_j\in \C$ such that $\dfd(s,C_j)\le (1+\eps)r$.

There exists a reduction from the $(1+\eps)$-approximate nearest-neighbor problem to the $(1+\eps,r)$-approximate nearest-neighbor problem~\cite{Indyk00}, at the cost of an additional logarithmic factor in the query time. 

\paragraph*{An exponential grid.}
Given a point $p\in\reals{}^{2}$, a parameter $0<\eps\le1$, and an interval $[\alpha,\beta]\subseteq\reals{}$, we can construct the following exponential grid $G(p)$ around $p$, which is a slightly different version of the exponential grid presented in~\cite{Driemel13}:

Consider the series of axis-parallel squares $S_i$ centered at $p$ and of side lengths $\lambda_i=2^i\alpha$, for $i=1,\dotsc,\lceil \log(\beta/\alpha)\rceil$. Inside each region $S_i\setminus S_{i-1}$ (for $i>1$), construct a grid $G_i$ of side length $\frac{\eps\lambda_i}{2\sqrt{2}}$.
The total number of grid cells is at most 
\[
1 + \sum_{i=2}^{\lceil \log(\beta/\alpha)\rceil}\big(\lambda_i / \frac{\eps\lambda_i}{2\sqrt{2}}\big)^2
=O((1/\eps)^2\lceil \log(\beta/\alpha)\rceil).
\]

Given a point $q\in\reals{}^2$ such that $\alpha\le\| q-p \|\le\beta$, 
let $i$ be the smallest index such that $q\in S_i$. If $q$ is in $S_1$, then $\|q-p\|\le\sqrt{2}\alpha$.
Else, we have $i>1$. Let $g$ be the grid cell of $G_i$ that contains $q$, and denote by $c_g$ the center point of $g$. 
So we have $\|q-c_g\|\le \frac{\sqrt{2}}{2}\frac{\eps\lambda_i}{2\sqrt{2}}=\frac{\eps}{2} 2^{i-1}\alpha\le \frac{\eps}{2} 2^{\log(\beta/\alpha)}\alpha=\frac{\eps\beta}{2}$.

\paragraph*{A data structure for $(1+\eps,r)$-\ann{}.}
For each curve $C_i=(p^i_1,\ldots,p^i_m) \in \C$, we construct two exponential grids: $G(p^i_1)$ around $p^i_1$ and $G(p^i_m)$ around $p^i_m$, both with the range $[\frac{\eps r}{2\sqrt{2}},\, r]$, as described above.
Now, for each pair of grid cells $(g,h)\in G(p^i_1)\times G(p^i_m)$, let $C(g,h)=C\in\C$ be the curve such that $\dfd(c_gc_h,C)=\min_j\{\dfd(c_gc_h,C_j)\}$. In other words, $C(g,h)$ is the closest input curve to the segment $c_gc_h$.

Let $\G_1$ be the union of the grids $G(p^1_1),G(p^2_1),\dotsc,G(p^n_1)$, and $\G_m$ the union of the grids $G(p^1_m),G(p^2_m),\dotsc,G(p^n_m)$.
The number of grid cells in each grid is $O((1/\eps)^2\lceil \log(r/\frac{\eps r}{2\sqrt{2}})\rceil)=O(\frac{1}{\eps^2}\log(1/\eps))$.
The number of grid cells in $\G_1$ and $\G_m$ is thus $O(n\frac{1}{\eps^2}\log(1/\eps))$. 

The data structure is a four-level segment tree, where each grid cell is represented in the structure by its bottom- and left-edges.
The first level is a segment tree for the horizontal edges of the cells of $\G_1$. The second level corresponds to the vertical edges of the cells of~$\G_1$: for each node $u$ in the first level, a segment tree is constructed for the set of vertical edges that correspond to the horizontal edges in the canonical subset of $u$. That is, if some horizontal edge of a cell in $G(p^i_1)$ is in $u$'s canonical subset, then the vertical edge of the same cell is in the segment tree of the second level associated with $u$.
Levels three and four of the data structure correspond to the horizontal and vertical edges, respectively, of the cells in $\G_m$.

The third level is constructed as follows. For each node $u$ in the second level, we construct a segment tree for the subset of horizontal edges of cells in $\G_m$ which corresponds to the canonical set of $u$; that is, if a vertical edge of $G(p^i_1)$ is in $u$'s canonical subset, then all the horizontal edges of $G(p^i_m)$ are in the subset corresponding to $u$'s canonical set. Thus, the size of the third-level subset is $O(\frac{1}{\eps^2}\log(1/\eps))$ times the size of the second-level subset.

Each node of the forth level corresponds to a subset of pairs of grid cells from the set $\bigcup\limits_{i=1}^{n} (G(p^i_1)\times G(p^i_m))$. In each such node $u$ we store the curve $C(g,h)$ such that $(g,h)$ is the pair in $u$'s corresponding set for which $\dfd(c_g c_h,C(g,h))$ is minimum.

Given a query segment $s=ab$, we can obtain all pairs of grid cells $(g,h)\in\bigcup\limits_{i=1}^{n} (G(p^i_1)\times G(p^i_m))$, such that $a\in g$ and $b\in h$, as a collection of $O(\log^4(\frac{n}{\eps}))$ canonical sets in $O(\log^4(\frac{n}{\eps}))$ time. Then, we can find, within the same time bound, the pair of cells $g,h$ among them for which $\dfd(c_g c_h,C(g,h))$ is minimum.
The space required is $O(n\frac{1}{\eps^4}\log^4(\frac{n}{\eps}))$.

\paragraph*{The query algorithm.}
Given a query segment $s=ab$,
let $p,q$ be the pair of cell center points returned when querying the data structure with $s$, and let $C_j\in\C$ be the closest curve to $pq$.
We show that if there exists a curve $C_i\in \C$ with $\dfd(ab,C_i)\le r$, then $\dfd(ab,C_j)\le (1+\eps) r$.

Since $\dfd(ab,C_i)\le r$, it holds that $\dfd(ab,p^i_1p^i_m)\le r$, and thus there exists a pair of grid cells $g\in G(p^i_1)$ and $h\in G(p^i_m)$ such that $a\in g$ and $b\in h$. The data structure returns~$p,q$, so we have $\dfd(pq,C_j)\le \dfd(c_g c_h,C_i)$\ \ (1). The properties of the exponential grids $G(p^i_1)$ and $G(p^i_m)$ guarantee that $\|a-c_g\|, \|b-c_h\| \le \max\{\sqrt{2}\alpha,\frac{\eps\beta}{2}\}=\frac{\eps}{2}r$. Therefore, $\dfd(c_g c_h,ab)\le \frac{\eps}{2} r$\ \ (2), and, similarly, $\dfd(pq,ab)\le \frac{\eps}{2} r$\ \ (3). 
By the triangle inequality and Equation~(2), 
$\dfd(c_g c_h,C_i)\le \dfd(c_g c_h,ab)+\dfd(ab,C_i)\le (1+\frac{\eps}{2})r$\ \ (4).
Finally, by the triangle inequality and Equations (1), (3) and (4), 
\begin{eqnarray*}
	\dfd(ab,C_j)\le \dfd(ab,pq)+\dfd(pq,C_j) 
	\le\dfd(ab,pq)+\dfd(c_g c_h,C_i) \\
	\le\frac{\eps}{2} r+(1+\frac{\eps}{2})r=(1+\eps)r\,.
\end{eqnarray*}

\begin{restatable}{theorem}{thmnnssegment}
	Given a set $\C$ of $n$ curves, each of size $m$, and $0 < \eps \le 1$, one can construct a search structure of size $O(\frac{n}{\eps^4}\log^4(\frac{n}{\eps}))$ for approximate segment nearest-neighbor queries. Given a query segment $s$, one can find in $O(\log^5(\frac{n}{\eps}))$ time a curve $C' \in \C$ such that $\dfd(s,C') \le (1+\eps)\dfd(s,C)$, under the $L_2$ metric, where $C$ is the curve in $\C$ closest to $s$.
\end{restatable}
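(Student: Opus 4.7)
The plan is to package the data structure and query algorithm already sketched into a clean correctness and complexity argument, and then lift from $(1+\eps,r)$-approximate to $(1+\eps)$-approximate nearest neighbor via the standard reduction of Indyk.

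First, I would formally verify the correctness of the $(1+\eps,r)$-\ann{} structure. The key claim is that if some $C_i \in \C$ satisfies $\dfd(ab,C_i)\le r$, then its first and last vertices lie within distance $r$ of $a$ and $b$ respectively, so by construction of the exponential grids $G(p^i_1)$ and $G(p^i_m)$ there exist cells $g\ni a$ and $h\ni b$ whose centers satisfy $\|a-c_g\|,\|b-c_h\|\le \frac{\eps}{2}r$. The proof is then just the triangle-inequality chain already in the text: $\dfd(c_gc_h,ab)\le \frac{\eps}{2}r$, $\dfd(c_gc_h,C_i)\le (1+\frac{\eps}{2})r$, and because the stored curve $C(g,h)=C_j$ minimizes $\dfd(c_gc_h,\cdot)$ we conclude $\dfd(ab,C_j)\le (1+\eps)r$. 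The subtlety worth writing out is the choice of range $[\tfrac{\eps r}{2\sqrt2},r]$: the lower endpoint handles the innermost square $S_1$, where $\|a-c_g\|\le \sqrt 2 \alpha = \tfrac{\eps}{2}r$, and the upper endpoint ensures every qualifying $a$ lies in some $S_i$.

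Second, I would verify the space bound. Each exponential grid contributes $1+\sum_{i=2}^{O(\log(1/\eps))}(\lambda_i / \tfrac{\eps\lambda_i}{2\sqrt2})^2 = O(\frac{1}{\eps^2}\log(1/\eps))$ cells, so $|\G_1|=|\G_m|=O(\frac{n}{\eps^2}\log(1/\eps))$. The four-level segment tree is built on these cell-boundaries: the first two levels resolve $a\in g$ for $g\in\G_1$, the last two resolve $b\in h$ for $h\in\G_m$, with each cell represented by its bottom- and left-edges. A two-level segment tree on $N$ intervals has size $O(N\log^2 N)$; nesting the $\G_m$-tree inside the $\G_1$-tree, the total number of canonical pairs summed over all nodes is $O(n\cdot \frac{1}{\eps^4}\log^4(\tfrac{n}{\eps}))$, which matches the claimed space. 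Each leaf-bucket stores only the single best precomputed curve $C(g,h)$ for the pairs it represents, so no extra factor arises from distance tables.

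Third, for query time, a stabbing query with the four coordinates of $s=ab$ descends through $O(\log(\tfrac{n}{\eps}))$ nodes per level and returns $O(\log^4(\tfrac{n}{\eps}))$ canonical nodes, each supplying one candidate curve; the minimum distance among them is the answer, giving $O(\log^4(\tfrac{n}{\eps}))$ per $r$-decision. Finally, I would invoke the reduction of Indyk to convert the $(1+\eps,r)$ structure into a true $(1+\eps)$-\ann{} structure by maintaining an appropriate geometric sequence of radii and binary-searching over them, paying an extra $O(\log(\tfrac{n}{\eps}))$ factor in query time but only a constant factor in space. This yields the claimed $O(\log^5(\tfrac{n}{\eps}))$ query bound.

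The main obstacle I expect is bookkeeping rather than substance: making sure that the four-level segment tree is truly a single unified structure that processes the two independent 2D point-location queries (``which cell of $\G_1$ contains $a$'' and ``which cell of $\G_m$ contains $b$'') in one descent, so that the output is organized directly as $O(\log^4)$ canonical \emph{pairs} and we do not incur a spurious $O(\log)$ from combining two separately-queried cell sets. Aligning the correspondence between horizontal edges of $\G_1$-cells and the associated $\G_m$-structure correctly through the three internal level-transitions is the one place where the construction must be stated with care.
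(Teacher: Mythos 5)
Your proposal follows essentially the same route as the paper: exponential grids of range $[\tfrac{\eps r}{2\sqrt2},r]$ around the first and last vertices of each curve, precomputed nearest curves $C(g,h)$ for cell-center pairs stored in a four-level segment tree, the same triangle-inequality chain for the $(1+\eps,r)$ guarantee, and the reduction of Indyk supplying the final logarithmic factor to reach $O(\log^5(\tfrac{n}{\eps}))$. The correctness, space, and query-time accounting all match the paper's argument, so this is a faithful reconstruction rather than an alternative proof.
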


\subsection{Input is a set of segments}
\label{sub:l2:seg}

In \Cref{sub:linfty:seg}, we presented an exact algorithm for the problem under $L_\infty$, in which we compute the intersections of the squares of radius $d$ around the vertices of the query curve, and use a two-level data structure for rectangle-pair queries.

To achieve an approximation factor of $(1+\eps)$ for the problem under $L_2$, we can use the same approach, except that instead of squares we use regular $k$-gons. 
Given a query curve $Q=(p_1,\dotsc,p_m)$, the intersections of the regular $k$-gons of radius $d$ around the vertices of $Q$ are polygons with at most $k$ edges, defined by at most $k$ sides of the regular $k$-gons. The orientations of the edges of the intersections are fixed, and thus we can construct a two-level data structure for $k$-gon-pair queries, where each level consists of $k$ inner levels, one for each possible orientation. The size of such a data structure is thus $O(n\log^{2k}n)$.

Given a parameter $\eps$, we pick $k=O(\frac{1}{\sqrt{\eps}})$, so that the approximation factor is $(1+\eps)$, the space complexity is $O(n\log^{O(\frac{1}{\sqrt{\eps}})}n)$ and the query time is $O(m \log^{O(\frac{1}{\sqrt{\eps}})}n)$.

\begin{theorem}
	Given an input $\S$ of $n$ segments, and $0 < \eps \le 1$, one can construct a search structure of size $O(n\log^{O(\frac{1}{\sqrt{\eps}})}n)$ for approximate segment nearest-neighbor queries. Given a query curve $Q$ of size $m$, one can find in $O(m \log^{O(\frac{1}{\sqrt{\eps}})}n)$ time a segment $s' \in \S$ such that $\dfd(s',Q) \le (1+\eps)\dfd(s,Q)$, under the $L_2$ metric, where $s$ is the segment in $\S$ closest to $Q$.
\end{theorem}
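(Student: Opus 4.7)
The plan is to mimic the exact $L_\infty$ algorithm of \Cref{sub:linfty:seg} verbatim, replacing axis-aligned squares around the vertices of $Q$ by regular $k$-gons that sandwich the Euclidean disk. First I would fix $k$ as the smallest integer with $1/\cos(\pi/k)\le 1+\eps$; since $\cos(\pi/k)\ge 1-\pi^2/(2k^2)$, this forces $k=O(1/\sqrt{\eps})$. Then, for every $r>0$, the regular $k$-gon $P(p,r)$ whose inscribed circle is $B(p,r)$ satisfies $B(p,r)\subseteq P(p,r)\subseteq B(p,(1+\eps)r)$. Using this sandwich vertex-by-vertex along an optimal alignment yields a $k$-gon analogue of \Cref{obs:linfty}: if we define $P_i(d)=\bigcap_{j\le i}P(p_j,d)$ and $\bar{P}_i(d)=\bigcap_{j>i}P(p_j,d)$, then the existence of $i$ with $a\in P_i(d)$ and $b\in\bar{P}_i(d)$ is sandwiched between $\dfd(s,Q)\le d$ and $\dfd(s,Q)\le (1+\eps)d$.

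Next I would exploit the fact that each side of $P_i(d)$ has one of only $k$ fixed orientations, and is determined by the extreme vertex of $Q[1,i]$ in that direction (independent of $d$), and likewise for $\bar{P}_i(d)$. Mirroring the four-level structure of \Cref{sub:linfty:seg}, I would build a $2k$-level range tree on $\S$: the first $k$ levels key the endpoints $a_j$ by their projections onto the $k$ fixed directions, and the next $k$ levels do the same for the endpoints $b_j$. With fractional cascading~\cite{Willard85} the resulting storage and per-query cost are $O(n\log^{2k-1} n)$ and $O(\log^{2k-1} n)$, respectively; substituting $k=O(1/\sqrt{\eps})$ gives the claimed bounds. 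Processing $Q$ once in each direction lets us maintain the $k$ extreme vertices of every prefix and suffix in $O(mk)$ time, so a single call of the decision procedure on a given $d$ costs $O(m\log^{O(1/\sqrt{\eps})} n)$.

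To locate $d^*$ I would argue exactly as in \Cref{sub:linfty:seg}: at the optimum one of $a^*,b^*$ lies on a side of its polygon, so $d^*$ is one of $2k$ sorted lists of $O(nk)$ candidate values obtained by projecting the endpoints of $\S$ on the $k$ fixed directions; a binary search on each list, resolving comparisons via the decision procedure, produces $d^*$ and a witnessing segment within the stated time bound. The main obstacle, and the only step not inherited from \Cref{sub:linfty:seg}, is justifying the approximation claim itself, i.e.\ verifying that the $(P_i(d),\bar{P}_i(d))$-emptiness predicate indeed certifies the $\dfd$ value up to the multiplicative factor $1+\eps$; this reduces to the containment sandwich above, but must be checked carefully in both directions so that no $\textsc{yes}$ answer is returned for $d<\dfd(s,Q)/(1+\eps)$ and no $\textsc{no}$ is returned for $d\ge\dfd(s,Q)$.
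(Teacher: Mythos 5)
Your proposal is correct and follows essentially the same route as the paper: replace the squares of the $L_\infty$ algorithm by regular $k$-gons with $k=O(1/\sqrt{\eps})$, exploit the fixed set of edge orientations to build a $2k$-level range tree for polygon-pair emptiness queries, and binary-search over the $O(nk)$ candidate radii. The paper's own proof is only a sketch; your write-up supplies the details it omits (the sandwich $B(p,r)\subseteq P(p,r)\subseteq B(p,(1+\eps)r)$, the two-sided correctness of the decision predicate, and the candidate-value search), all of which check out.
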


\section{\otc}
 \label{sec:otc}

The objective of the \otc{} problem is to find a segment $s$ such that 
$\max_{C_i \in \C} \dfd(s,C_i)$ is minimized.
This can be reformulated equivalenly as:
Find a pair of balls $(B,\overline{B})$, such that (i)~for each curve $C \in \C$, there exists a partition at $1\leq i < m$ of $C$ into prefix $C[1,i]$ and suffix $C[i+1,m]$, with $C[1,i] \subseteq B$ and $C[i+1,m] \subseteq \overline{B}$, and (ii)~the radius of the larger ball is minimized.

\subsection{\otc{} and \texorpdfstring{$L_{\infty}$}{L-infinity} metric}
\label{sec:otc:linfty}

An optimal solution to the \otc{} problem under the $L_\infty$ metric
is a pair of squares $(S,\overline{S})$, where $S$ contains all the prefix vertices and $\overline{S}$ contains all the suffix vertices. 
Assume that the optimal radius is $r^*$, and that it is determined by~$S$, i.e., the radius of $S$ is $r^*$ and the radius of $\overline{S}$ is at most $r^*$. Then, there must exist two \emph{determining vertices} $p,p'$, belonging to the prefixes of their respective curves, such that $p$ and $p'$ lie on opposite sides of the boundary of $S$. Clearly, $||p - p'||_\infty = 2r^*$. Let the positive normal direction of the sides be the \emph{determining direction} of the solution. 

Let $R$ be the axis-aligned bounding rectangle of $C_1\cup\dots\cup C_n$, and denote by $e_\ell$, $e_r$, $e_t$, and $e_b$ the left, right, top, and bottom edges of $R$, respectively.

\begin{restatable}{lemma}{lemmaotcextremal}
  \label{lem:otc:extremal}
      At least one of $p,p'$ must lie on the boundary of $R$.
\end{restatable}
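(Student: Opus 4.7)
The plan is to argue by contradiction. WLOG (by rotating coordinates) the determining direction is horizontal, so that $p$ lies on the left side of $S$ and $p'$ lies on the right side, giving $p.x = S_\ell$, $p'.x = S_r$, and $p'.x - p.x = 2r^*$, where $[S_\ell, S_r] \times [S_b, S_t]$ describes $S$. Assume, for contradiction, that neither $p$ nor $p'$ lies on $\partial R$. In particular $p.x$ is strictly greater than the $x$-coordinate of $e_\ell$, and $p'.x$ is strictly less than the $x$-coordinate of $e_r$.

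First I would extract two witness vertices: let $q$ be any vertex (of some input curve) attaining the minimum $x$-coordinate of $R$, i.e.\ lying on $e_\ell$, and $q'$ any vertex attaining the maximum $x$-coordinate, i.e.\ lying on $e_r$. By the contradictory hypothesis we have $q.x < p.x$ and $q'.x > p'.x$.

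The key step is to observe that $q$ and $q'$ must be suffix vertices in the optimal partition, hence must lie in $\overline{S}$. Indeed, if $q$ belonged to the prefix of its curve, then $q \in S$, which under $L_\infty$ forces $q.x \ge S_\ell = p.x$, contradicting $q.x < p.x$. The symmetric argument handles $q'$. Once $q, q' \in \overline{S}$ is established, the horizontal extent of $\overline{S}$ is at least $q'.x - q.x > p'.x - p.x = 2r^*$, so the radius of $\overline{S}$ exceeds $r^*$. This contradicts the assumption that $S$ determines the optimum, i.e.\ that $\overline{S}$ has radius at most $r^*$.

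There is no real obstacle here; the only thing to be careful about is to spell out the WLOG (covering both horizontal and vertical determining directions, and both orderings of $p, p'$ along that direction) and to note that "being inside $S$" in the $L_\infty$ metric immediately translates into a coordinate-wise constraint $q.x \in [S_\ell, S_r]$, which is precisely what drives the contradiction. A one-line remark can also record that, dually, if the optimum were determined in the vertical direction, the same argument applied to $y$-coordinates and to the edges $e_b, e_t$ gives the conclusion.
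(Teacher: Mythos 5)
Your proof is correct and follows essentially the same route as the paper's: assume neither $p$ nor $p'$ is on $\partial R$, exhibit vertices $q,q'$ that are horizontally more extreme than $p,p'$, conclude they must lie in $\overline{S}$, and derive that $\overline{S}$ would then need radius exceeding $r^*$. The only difference is that you make explicit why $q,q'\in\overline{S}$ (by taking them on $e_\ell$ and $e_r$ and ruling out membership in $S$ via the coordinate-wise description of the square), a step the paper's proof asserts without elaboration.
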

\begin{proof}
	Assume that the determining direction is the positive $x$-direction, and that neither $p$ nor $p'$ lies on the boundary of~$R$. Thus, there must exist a pair of vertices $q,q' \in \overline{S}$ with $q.x < p.x$ and $q'.x > p'.x$, which implies that $||q-q'||_\infty > ||p-p'||_\infty = 2r^*$, contradicting the assumption that $p,p'$ are the determining vertices.
\end{proof}

We say that a corner of $S$ (or $\overline{S}$) \emph{coincides} with a corner of $R$ when the corner points are incident, and they are both of the same type, i.e., top-left, bottom-right, etc.

\begin{restatable}{lemma}{lemmaotccorner}
    \label{lem:otc:corner}
    There exists an optimal solution $(S,\overline{S})$ where at least one corner of~$S$ or $\overline{S}$ coincides with a corner of~$R$.  
\end{restatable}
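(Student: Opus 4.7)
The plan is to take any optimal pair $(S,\overline{S})$ of squares of common side $2r^*$ (inflating $\overline{S}$ if its radius was smaller) with determining vertices $p,p'$, and translate $S$ and $\overline{S}$ separately---each within the slack that preserves its covering of the corresponding prefix/suffix vertices---until a corner of $S$ or of $\overline{S}$ coincides with a corner of $R$. By Lemma~\ref{lem:otc:extremal}, after reflecting coordinates if necessary, I will assume the determining direction is the positive $x$-axis and $p'.x=R.x_{\max}$. Then $p$ and $p'$ lie on the left and right sides of $S$ with $p'.x-p.x=2r^*$, pinning the horizontal position so that $S.x_{\max}=R.x_{\max}$; only the vertical position of $S$ is free, and $\overline{S}$ may be freely translated in both axes within ranges controlled by the suffix vertices.

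Write $P$ and $Q$ for the sets of prefix and suffix vertices of all curves in the partition witnessed by $(S,\overline{S})$, and let $y_{\min}^P,y_{\max}^P,x_{\min}^Q,\dotsc$ denote the corresponding extremes; the feasible range for $S.y_{\max}$ is $[y_{\max}^P,\,y_{\min}^P+2r^*]$. The case analysis is on the witnesses of the horizontal sides of $R$. If $R.y_{\max}=y_{\max}^P$, slide $S$ down so that $S.y_{\max}=R.y_{\max}$; this is the lower endpoint of the feasible range, and the top-right corners of $S$ and $R$ coincide. Symmetrically, if $R.y_{\min}=y_{\min}^P$, align the bottom-right corners.

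In the remaining case, $R.y_{\max}=y_{\max}^Q$ and $R.y_{\min}=y_{\min}^Q$, so $R.y_{\max}-R.y_{\min}\le 2r^*$ because $\overline{S}$ has side $2r^*$ and contains $Q$. Split further on $R.x_{\min}$. If $R.x_{\min}=x_{\min}^Q$, I would hand the argument off to $\overline{S}$: choose $\overline{S}.x_{\min}=R.x_{\min}$ and $\overline{S}.y_{\min}=R.y_{\min}$, which is feasible because $Q$ has both width and height at most $2r^*$, so $\overline{S}$ still contains $Q$; the bottom-left corners of $\overline{S}$ and $R$ then coincide. Otherwise $R.x_{\min}=x_{\min}^P=p.x$, so $p$ also lies on the boundary of $R$; then $R.y_{\min}\le y_{\min}^P$ yields $R.y_{\max}-y_{\min}^P\le R.y_{\max}-R.y_{\min}\le 2r^*$, so $R.y_{\max}$ still lies in the feasible range for $S.y_{\max}$, and one more vertical slide of $S$ produces coincident top-right corners of $S$ and $R$.

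The main obstacle is the last case, where both vertical extremes of $R$ come only from $Q$: the naive vertical slide of $S$ need not immediately deliver a corner coincidence, and the argument must either transfer to $\overline{S}$ (whose unpinned position provides slack in both axes) or exploit the forced coincidence $S.x_{\min}=R.x_{\min}=p.x$ together with the bound $R.y_{\max}-R.y_{\min}\le 2r^*$ imposed by $\overline{S}$'s containment of $Q$. The other cases reduce to a single one-dimensional slide of $S$ and are routine.
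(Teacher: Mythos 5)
Your argument is correct and takes essentially the same route as the paper's proof: use the determining vertices together with \Cref{lem:otc:extremal} to pin one coordinate of the determined square, then slide it in the orthogonal direction, handling the hard case with the same key observation that if both orthogonal extremes of $R$ are witnessed only by suffix vertices then $\overline{S}$ forces $R$ to have extent at most $2r^*$ in that direction, so the slide still reaches a corner. The only remarks are that your sub-case with $R.x_{\min}=x_{\min}^Q$ (repositioning $\overline{S}$) is subsumed by the sub-case that follows it, since that slide of $S$ works throughout case C, and that, exactly as in the paper's own proof, the step ``WLOG $p'.x=R.x_{\max}$'' relies on the slightly stronger form of \Cref{lem:otc:extremal} (one of $p.x=R.x_{\min}$ or $p'.x=R.x_{\max}$ holds) which the lemma's proof in fact establishes.
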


\begin{proof}
	Let $p,p' \in S$ be a pair of determining vertices, and assume, without loss of generality, that $p$ lies on the boundary of $R$. %
	If $p$ is a corner of $R$, then the claim trivially holds.
	Otherwise, $p$ lies in the interior of an edge of $R$, and assume without loss of generality that it lies on $e_\ell$.
	If $S$ contains a vertex on $e_t$, then we can shift $S$ vertically down until its top edge overlaps $e_t$. Else, if it contains a vertex on $e_b$, then we can shift $S$ up until its bottom edge overlaps~$e_b$.  In both cases, the lemma conclusion holds.
	
	If $S$ does not contain any vertex from $e_t$ or $e_b$, then clearly $\overline{S}$ must contain vertices $q \in e_t$ and $q' \in e_b$ with $||q-q'||_\infty\le 2r^*$. Therefore, $S$ intersects $e_b$ or $e_t$ (or both), and can be shifted vertically until its boundary overlaps $e_b$ or $e_t$, as desired.
	
	A symmetric argument can be made when $p$ and $p'$ are suffix vertices, i.e., $p,p' \in \overline{S}$.
\end{proof}

\Cref{lem:otc:corner} implies that for a given input $\C$ where the determining vertices are in $S$, there must exist an optimal solution where $S$ is positioned so that one of its corners coincides with a corner of the bounding rectangle, and that one of the determining vertices is on the boundary of $R$.
The optimal solution can thus be found by testing all possible candidate squares that satisfy these properties and returning the valid solution that yields the smallest radius.
The algorithm presented in the sequel will compute the radius $r^*$ of an optimal solution $(S^*,\overline{S^*})$ such that $r^*$ is determined by the prefix square $S^*$, see \Cref{fig:center-bb}. The solution where $r^*$ is determined by $\overline{S^*}$ can be computed in a symmetric manner.

\begin{figure}[h]
	\begin{center}
		\includegraphics[scale=0.9]{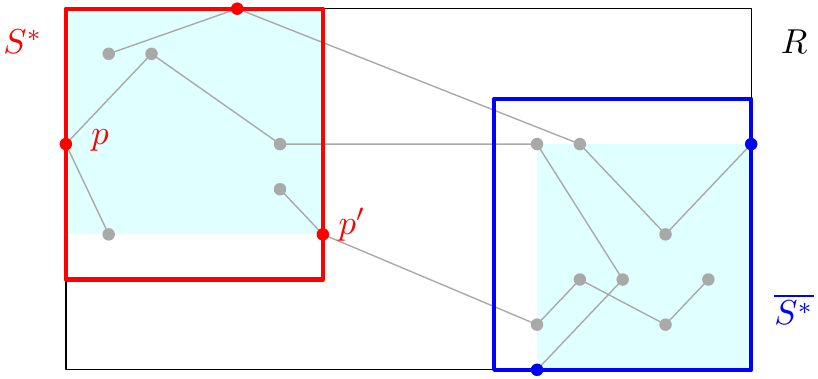}
	\end{center}
	\caption{\label{fig:center-bb} The optimal solution is characterized by a pair of points $p$, $p'$ lying on the boundary of $S^*$, and a corner of $S^*$ coincides with a corner of $R$.}
\end{figure}

For each corner $v$ of the bounding rectangle $R$, we sort the $(m-2)n$ vertices in $C_1 \cup\dots\cup C_n$ that are not endpoints---the initial vertex of each curve must always be contained in the prefix, and the final vertex in the suffix---by their $L_\infty$ distance from $v$. 
Each vertex $p$ in this ordering is associated with a square $S$ of radius $||v-p||_\infty/2$, coinciding with $R$ at corner $v$.

A sequential pass is made over the vertices, and their respective squares~$S$, and for each~$S$ we compute the radius of $S$ and $\overline{S}$ using the following data structures.
We maintain a balanced binary tree $T_C$ for each curve $C\in\C$, where the leaves of $T_C$ correspond to the vertices of $C$, in order.  
Each node of the tree contains a single bit: The bit at a leaf node corresponding to vertex $p_j$ indicates whether $p_j\in S$, where $S$ is the current square. 
The value of the bit at a leaf of $T_C$ can be updated in $O(\log{m})$ time.
The bit of an internal node is 1 if and only if all the bits in the leaves of its subtree are 1, and thus the longest prefix of $C$ can be determined in $O(\log{m})$ time. %
At each step in the pass, the radius of $\overline{S}$ must also be computed, and this is obtained by determining the bounding box of the suffix vertices. 
Thus, two balanced binary trees are maintained: $\overline{T}_x$ contains a leaf for each of the suffix vertices ordered by their $x$-coordinate; and $\overline{T}_y$ where the leaves are ordered by the $y$-coordinate.
The extremal vertices that determine the bounding box can be determined in $O(\log{mn})$ time.
Finally, the current optimal squares $S^*$ and $\overline{S^*}$, and the radius $r^*$ of $S^*$ are persisted. 

The trees $T_{C_1},\ldots,T_{C_n}$ are constructed with all bits initialized to 0, except for the bit corresponding to the initial vertex in each tree which is set to 1, taking $O(nm)$ time in total. 
$\overline{T}_x$ and $\overline{T}_y$ are initialized to contain all non-initial vertices in $O(mn\log{mn})$ time.
The optimal square $S^*$ containing all the initial vertices is computed, and $\overline{S^*}$ is set to contain the remaining vertices.
The optimal radius~$r^*$ is the larger of the radii induced by $S^*$ and $\overline{S^*}$.

At the step in the pass for vertex $p$ of curve $C_j$ whose associated square is~$S$, the leaf of~$T_C$ corresponding to $p$ is updated from $0$ to $1$ in $O(\log{m})$ time.
The index $i$ of the longest prefix covered by $S$ can then be determined, also in $O(\log{m})$ time.
The vertices from $C_j$ that are now in the prefix must be deleted from $\overline{T}_x$ and $\overline{T}_y$, and although there may be $O(m)$ of them in any iteration, each will be deleted exactly once, and so the total update time over the entire sequential pass is $O(mn\log{mn})$.
The radius of the square $S$ is $\Vert v - p \Vert_\infty/2$, and the radius of $\overline{S}$ can be computed in $O(\log{mn})$ time as half the larger of $x$- and $y$-extent of the suffix bounding box.
The optimal squares $S^*$, $\overline{S^*}$, and the cost $r^*$ are updated if the radius of $S$ determines the cost, and the radius of $S$ is less than the existing value of $r^*$.

Finally, we return the optimal pair of squares $(S^*,\overline{S^*})$ with the minimal cost~$r^*$.

\begin{theorem}
    Given a set of curves $\C$ as input, an optimal solution to the \otc\ problem using the discrete \frechet\ distance under the $L_\infty$ metric can be computed in time $O(mn\log{mn})$ using $O(mn)$ storage.
\end{theorem}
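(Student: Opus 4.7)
The plan is to combine Lemmas~\ref{lem:otc:extremal} and~\ref{lem:otc:corner} with a sweep over all candidate optimal prefix squares, and then verify that the data-structure maintenance amortizes to the claimed bound. By symmetry (reverse each curve so that prefixes become suffixes) we may assume that the optimal cost $r^*$ is realized by the prefix square $S^*$. \Cref{lem:otc:corner} then lets us assume that some corner of $S^*$ coincides with one of the four corners of the bounding rectangle $R$, and \Cref{lem:otc:extremal} forces one of the determining vertices to lie on $\partial R$. With the corner of $S^*$ fixed at some $v\in\{v_1,\dots,v_4\}$, $S^*$ is uniquely determined by its $L_\infty$-farthest vertex $p^*$, which is one of the $(m-2)n$ non-endpoint vertices of $\C$. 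So iterating, for each of the four corners $v$, through the non-endpoint vertices $p$ in order of increasing $\|v-p\|_\infty$ produces a finite list of candidates guaranteed to contain $S^*$; handling the symmetric case where $\overline{S^*}$ determines $r^*$ is analogous.

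For each candidate square $S$ encountered in the sweep, the algorithm must (i)~certify that the vertices of every curve $C\in\C$ currently lying inside $S$ form a genuine prefix of $C$, and (ii)~compute the smallest axis-aligned square $\overline{S}$ enclosing the remaining suffix vertices across all curves. For (i), the per-curve tree $T_C$ stores at each internal node the AND of its children's bits; toggling the leaf bit of the newly-added vertex costs $O(\log m)$, and a descent from the root locates the longest all-$1$ prefix, again in $O(\log m)$. The candidate is valid iff this prefix covers exactly the vertices currently inside $S$. For (ii), the balanced trees $\overline{T}_x$ and $\overline{T}_y$ keyed by the $x$- and $y$-coordinates of the suffix vertices yield the suffix bounding box in $O(\log mn)$. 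Whenever the certified prefix of some curve grows, the newly-promoted vertices are removed from $\overline{T}_x,\overline{T}_y$ at cost $O(\log mn)$ each, and the current best triple $(S^*,\overline{S^*},r^*)$ is updated if the radius of the current $S$ dominates that of $\overline{S}$ and improves on $r^*$.

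To account for the cost, note that the pre-sort at each corner takes $O(mn\log mn)$; the $O(mn)$ iterations per corner each spend $O(\log mn)$ on bit updates, prefix queries, and bounding-box queries; and deletions from the suffix trees, though possibly bursty, are each charged at most once per vertex per corner, giving an amortized total of $O(mn\log mn)$ per corner. Summing over the four corners of $R$ and the two symmetric cases (prefix-square- vs.\ suffix-square-determined optimum) yields the overall time bound $O(mn\log mn)$. Storage is dominated by the $n$ trees $T_{C_1},\dots,T_{C_n}$, contributing $O(mn)$ in total, together with $\overline{T}_x$ and $\overline{T}_y$, which add another $O(mn)$.

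The main subtlety to get right is step (i): it does \emph{not} suffice to count how many vertices of $C$ lie in $S$, since they must form a true prefix of $C$; the AND-bit tree $T_C$ encodes exactly this property and supports both the update and the longest-prefix query in logarithmic time. Everything else follows from the monotonicity of the sweep (vertices only ever enter $S$), which is what makes the deletions from $\overline{T}_x,\overline{T}_y$ amortize to the claimed bound rather than inflating it by a factor of $m$.
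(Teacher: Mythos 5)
Your proposal follows the paper's proof essentially verbatim: the same reduction via \Cref{lem:otc:extremal} and \Cref{lem:otc:corner} to candidate squares anchored at a corner of $R$, the same sort of the $(m-2)n$ non-endpoint vertices by $L_\infty$-distance from each corner, the same AND-bit trees $T_C$ for longest-prefix queries, the same suffix trees $\overline{T}_x,\overline{T}_y$ for the suffix bounding box, and the same amortization of deletions, yielding $O(mn\log mn)$ time and $O(mn)$ space.

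The one place you deviate is your step~(i) validity test: you declare a candidate $S$ valid \emph{iff} the longest all-$1$ prefix of $T_C$ covers \emph{exactly} the vertices of $C$ currently inside $S$. That condition is too strong and is not what the paper does. A vertex $p_j$ with $j$ beyond the partition index may perfectly well lie inside $S$ (nothing in the problem forbids suffix vertices from entering the prefix square), so rejecting such candidates can discard the optimal one. The paper instead evaluates \emph{every} candidate: it takes the longest prefix of each curve covered by $S$ (which is the best possible partition for that $S$, since a longer certified prefix only shrinks the suffix and hence $\overline{S}$), deletes exactly those promoted vertices from the suffix trees, and records the cost. If you drop the ``exactly'' requirement and simply use the longest covered prefix as the partition, your argument coincides with the paper's and is correct.
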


\subsection{\otc\ under translation and \texorpdfstring{$L_\infty$}{L-infinity} metric}
\label{sec:otc:translation}

The \otc{} problem under translation and the $L_{\infty}$ metric can be solved using a similar approach.
The objective is to find a segment $s^*$ that minimizes the maximum discrete \frechet\ distance under $L_\infty$ between $s^*$ and the input curves whose locations are fixed only up to translation.
A solution %
will be a pair of squares $(S,\overline{S})$ of equal size and whose radius $r^*$ is minimal, such that, for each $C \in \C$, there exists a translation $t$ and a partition index $i$ where $C_t[1,i] \subset S$ and $C_t[i+i,m] \subset \overline{S}$.
Clearly, an optimal solution will not be unique as the curves can be uniformly translated to obtain an equivalent solution, and moreover, in general there is freedom to translate either square in the direction of at least one of the $x$- or $y$-axes.

Let $\width(C)$ %
be the $x$-extent of the curve $C$ %
and $\height(C)$ be the $y$-extent. %
Let $R$ be the closed rectangle whose bottom-left corner lies at the origin and whose top-right corner is located at $(\delta_x^*, \delta_y^*)$ where $\delta_x^* := \max_{C \in \C} \width(C)$ and $\delta_y^* := \max_{C \in \C} \height(C)$.
Furthermore, let $w_\ell$ and $w_r$ be the left- and right-most vertices in a curve with $x$-span $\delta_x^*$, and let $w_t$ and $w_b$ be the top- and bottom-most vertices in a curve with $y$-span $\delta_y^*$.
Clearly, all curves in $\C$ can be translated to be contained within $R$, and for all such sets of curves under translation, the extremal vertices $w_t$, $w_b$, $w_\ell$ and $w_r$ each must lie on the corresponding side of $R$. %
We claim that if a solution exists with radius $r^*$, then an equivalent solution $(S,\overline{S})$ can be obtained using the same partition of each curve, where $S$ and $\overline{S}$ are placed at opposite corners of $R$. %

\begin{lemma}
	\label{lem:otc:trans:bb}
	Given a set $\C$ of $n$ curves, if there exists a solution of radius $r^*$ to the problem, then there also exists a solution $(S,\overline{S})$ of radius $r^*$ where a corner of $S$ and a corner of $\overline{S}$ coincide with opposite corners of the rectangle $R$. %
\end{lemma}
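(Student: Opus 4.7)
My approach exploits the fact that validity of a solution under translation depends only on the relative offset of the two squares' centers. Specifically, applying any common translation $v$ to both $S$ and $\bar{S}$, while replacing each per-curve translation $t_j$ by $t_j + v$, preserves every containment $C_j[1, i_j] + t_j \subseteq S$ and $C_j[i_j+1, m] + t_j \subseteq \bar{S}$. Consequently, validity depends only on the relative offset $(\Delta_x, \Delta_y)$ of the square centers; I plan to describe the set of valid offsets explicitly and then show that it always contains an offset of corner-configuration form, after which a final common translation achieves the desired absolute placement.

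Fixing the partitions $\{i_j\}$ of the given solution, for each curve $C_j$ a short Minkowski-sum computation in the spirit of \Cref{obs:seg:location} shows that the set of offsets admitting a valid translation $t_j$ is an axis-aligned rectangle whose $\Delta_x$-extent is $[R_2^{(j)} - L_1^{(j)} - 2r^*, \, L_2^{(j)} - R_1^{(j)} + 2r^*]$---where $[L_1^{(j)}, R_1^{(j)}]$ and $[L_2^{(j)}, R_2^{(j)}]$ denote the $x$-extents of the prefix and suffix bounding boxes of $C_j$---and whose $\Delta_y$-extent is defined analogously. Intersecting these rectangles over all $j$ yields an axis-aligned rectangle $\mathcal{D}$ of globally feasible offsets, nonempty by the assumed existence of a solution of radius $r^*$.

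The key step is to identify the extreme values of $\mathcal{D}$'s projections with corner-configuration offsets. Since $R_2^{(j)} - L_1^{(j)} \le \width(C_j) \le \delta_x^*$, we have $\Delta_x^{\min}(\mathcal{D}) \le \delta_x^* - 2r^*$, with equality exactly when the partition of some $x$-extremal curve places $w_\ell$ in its prefix and $w_r$ in its suffix; symmetrically, $\Delta_x^{\max}(\mathcal{D}) \ge 2r^* - \delta_x^*$, with equality when some $x$-extremal curve is partitioned the opposite way. When $\delta_x^* \le 2r^*$, both values $\pm(\delta_x^* - 2r^*)$ lie in $\mathcal{D}$'s $\Delta_x$-projection; when $\delta_x^* > 2r^*$, nonemptiness of $\mathcal{D}$ forces all $x$-extremal curves to be partitioned the same way, and exactly one of $\pm(\delta_x^* - 2r^*)$ lies in $\mathcal{D}$. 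An analogous dichotomy holds for $\Delta_y$. The values $\pm(\delta_x^* - 2r^*)$ and $\pm(\delta_y^* - 2r^*)$ are precisely the offsets induced by placing a corner of $S$ at a corner of $R$ and a corner of $\bar{S}$ at the opposite corner; since $\mathcal{D}$ is a rectangle, the vertex combining one feasible $\Delta_x$-extreme with one feasible $\Delta_y$-extreme is a corner-configuration offset lying in $\mathcal{D}$, and a final common translation places $(S, \bar{S})$ at the chosen opposite corners of $R$.

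The main obstacle I anticipate is the case-analysis bookkeeping: matching the side of the partition containing each extremal vertex ($w_\ell, w_r, w_b, w_t$) with the correct opposite pair of corners of $R$ and the correct corner of each square anchored there. The degenerate sub-cases ($\delta_x^* \le 2r^*$ or $\delta_y^* \le 2r^*$) require separate treatment, though they provide additional slack so that multiple corner configurations remain valid, making them more flexible once identified.
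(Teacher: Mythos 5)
Your proposal is correct, but it proves the lemma by a genuinely different route than the paper. The paper's proof is a continuous deformation argument: it anchors $S$ at the appropriate corner of $R$, then slides $\overline{S}(\lambda)$ toward the opposite corner (first in $x$, then in $y$), dragging along any curve whose suffix vertex lies on the trailing edge, and rules out a prefix vertex escaping $S$ because that would force some curve to have $x$-extent exceeding $\delta_x^*$. You instead quotient out the common translation, fix the partitions of the given solution, and characterize the set of feasible center offsets $(\Delta_x,\Delta_y)$ as an intersection of per-curve axis-aligned rectangles (a Minkowski-sum computation in the spirit of \Cref{obs:seg:location}), so that the feasible set $\mathcal{D}$ is itself a nonempty axis-aligned rectangle; you then verify that its projections always contain one of $\pm(\delta_x^*-2r^*)$ and one of $\pm(\delta_y^*-2r^*)$, which are exactly the corner-configuration offsets, and the product structure of $\mathcal{D}$ lets you combine them. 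The key facts you need --- $R_2^{(j)}-L_1^{(j)}\le\width(C_j)\le\delta_x^*$, and, when $\delta_x^*>2r^*$, that every $x$-extremal curve must be split with $w_\ell$ and $w_r$ on opposite sides of the partition (else its prefix or suffix would not fit in a square of radius $r^*$) --- are all sound, and they force the relevant extreme of $\mathcal{D}$'s projection to equal a corner offset. Your version is more algebraic and handles the case analysis more systematically than the paper's sliding argument (which has to track which vertices pin the moving square and dispatches the remaining corner configurations to symmetry); it also has the side benefit of directly yielding the closed-form lower bound $r_i^j$ that the paper uses immediately after the lemma. The paper's argument is shorter to state and more geometric, but yours is at least as rigorous; the remaining work in your write-up is only the sign/corner bookkeeping you already identify.
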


\begin{proof}
	Let $(S',\overline{S'})$ be a solution of radius $r^*$ where all the curves under translation are not necessarily contained in $R$, and the corners of $S'$ and $\overline{S'}$ do not coincide with the corners of $R$.
	The proof is constructive: The coordinate system is defined such that prefix square $S'$ is positioned so that its corner coincides with the appropriate corner of $R$ ensuring that $S' \equiv S$, and we define a continuous family of squares $\overline{S}(\lambda)$ parameterized on $\lambda \in [0,1]$ where $\overline{S}(0) = \overline{S'}$ and $\overline{S}(1) = \overline{S}$, such that $\overline{S}$ coincides with the opposite corner of $R$. 
	This family traces a translation of $\overline{S}(\lambda)$, first in the $x$-direction and then in the $y$-direction, and we show that the prefix and suffix of each curve---possibly under translation---remain within $S$ and $\overline{S}(\lambda)$, and thus the solution remains valid.
	
	We prove this for the case where the top-right corner $v$ of $S'$ is \emph{below-left} the top-right corner $\overline{v}$ of $\overline{S'}$, i.e., $v.x \leq \overline{v}.x$ and $v.y \leq \overline{v}.y$.
	In the sequel we will show that an equivalent solution $(S,\overline{S})$ exists where the bottom-left corner of $S$ lies at the origin and the top-right corner of $\overline{S}$ lies at $(\delta_x^*,\delta_y^*)$ as required by the claim in the lemma.
	A symmetric argument exists for the other cases where $v$'s position relative to $\overline{c}$ is \emph{above-left}, \emph{below-right} and \emph{below-left}.
	
	First, observe that $\overline{v}.x \geq \delta_x^*$, as either $w_r$ is a vertex in a prefix of some curve and thus $\delta_x^* \leq v.x \leq \overline{v}.x$, or $w_r$ is a vertex in a suffix and thus $\delta_x^* \leq \overline{v}.x$.
	A similar argument proves that $\overline{v}.y \geq \delta_y^*$, and thus $\overline{S}(\lambda)$ will move to the left until the $x$-coordinate of the right edge of $\overline{S}$ is $\delta_x^*$ and then down under the continuous translation to $\overline{S}$, i.e., the $y$-coordinate of the top edge of $\overline{S}$ is $\delta_y^*$.
	
	Consider the validity of the solution $(S,\overline{S}(\lambda))$ as the suffix square moves leftwards. 
	If there are no suffix vertices on the right edge of square $\overline{S}(\lambda)$ then it can be translated to the left and remain a valid solution, until such time as some suffix vertex $\overline{p}$ of curve $C$ lies on the right edge.
	Subsequently, $C$ is translated together with $\overline{S}(\lambda)$, and thus the suffix vertices of $C$ continue to be contained in $\overline{S}(\lambda)$.
	For a prefix vertex $p$ of $C$ to move outside $S$ under the translation it must cross the left-side of $S$, however this would imply that $\vert \overline{p}.x - p.x \vert > \overline{p}.x \geq \delta_x^*$, contradicting the fact that $\delta_x^*$ is the maximum extent in the $x$-direction of all curves.
	The same analysis can be applied to the translation of $\overline{S}(\lambda)$ in the downward direction.
	This shows that the continuous family of squares $\overline{S}(\lambda)$ imply a family of optimal solutions $(S,\overline{S}(\lambda))$ to the problem, and in particular $(S,\overline{S})$ is a solution.
\end{proof}

\Cref{lem:otc:trans:bb} implies that an optimal solution of radius $r^*$ exists where $S$ and $\overline{S}$ coincide with opposite corners of $R$.
Next, we consider the properties of such an optimal solution, and show that $r^*$ is determined by two vertices from a \emph{single} curve.
Recall that a pair of vertices %
are \emph{determining vertices} if they lie on opposite sides of one of the squares.
Here, we refine the definition with the condition that the pair both belong to the prefix or suffix of the same curve.
Furthermore, denote a pair of vertices $(p,\overline{p})$, where $p$ is in the prefix and $\overline{p}$ is in the suffix of the same curve, %
as \emph{opposing vertices} if they preclude a smaller pair of squares coincident with the same opposing corners of $R$. %
Assuming that $S$ coincides with the top-left corner of $R$ and $\overline{S}$ with the bottom-right corner, then %
$p$ and $\overline{p}$ are opposing vertices if, either:
(i) $p$ lies on the right edge of $S$ and $\overline{p}$ lies on the left edge of $\overline{S}$; or 
(ii) $p$ lies on bottom edge of $S$ and $\overline{p}$ lies on the top edge of $\overline{S}$.
Symmetrical conditions exist for the cases where $S$ and $\overline{S}$ are coincident with the other three (ordered) pairs of corners.
We claim that the conditions in the following lemma are necessary for a solution.

\begin{lemma}
	\label{lem:otc:trans:optvert}
	Let $(S,\overline{S})$ be an optimal solution of radius $r^*$ such that $S$ and $\overline{S}$ are coincident with opposite corners of $R$, %
	and let $\C' := \{C_t \mid C \in \C\}$ be the set of curves under translation from which $(S,\overline{S})$ was obtained.
	At least one of the following conditions must hold for some curve $C_t \in \C'$:
	\begin{enumerate}[(i)]
		\item \label{it:otc:trans:opop} there must be a pair of determining vertices for either $S$ or $\overline{S}$; or %
		\item \label{it:otc:trans:opvert} there must be a pair of opposing vertices for $S$ and $\overline{S}$. %
	\end{enumerate}
	
\end{lemma}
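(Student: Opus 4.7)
The plan is to argue by contradiction: assuming that neither (i) nor (ii) holds for any curve in $\C'$, I will construct a feasible solution of radius strictly less than $r^*$, contradicting the optimality of $(S,\overline{S})$. By the symmetric definitions of opposing vertices for the other three corner configurations, I can assume without loss of generality that $S$'s pinned corner is the top-left corner of $R$ and $\overline{S}$'s pinned corner is the bottom-right corner of $R$.

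I will shrink both squares uniformly: for small $\eps > 0$, let $S(\eps)$ and $\overline{S}(\eps)$ denote the squares of radius $r^* - \eps$ with the same two corners pinned at the top-left and bottom-right of $R$, respectively. This moves the right and bottom edges of $S$ inward by $2\eps$ and the left and top edges of $\overline{S}$ inward by $2\eps$, while the other two edges of each square remain fixed. The goal is then to prove that for every curve $C \in \C'$, keeping its original partition index, one can find an additional translation $(\Delta x_C, \Delta y_C)$ under which the prefix lies in $S(\eps)$ and the suffix lies in $\overline{S}(\eps)$.

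The main step is bookkeeping. For each $C$, introduce the eight nonnegative slacks $a^{S,*}_C$ (the distance from the extremal prefix vertex of $C$ in direction $*$ to the corresponding edge of $S$), for $* \in \{\ell, r, b, t\}$, and the analogous suffix slacks $\overline{a}^{\overline{S},*}_C$ with respect to $\overline{S}$. Writing out the four constraints on $\Delta x_C$ (two forcing the prefix to stay in $S(\eps)$ and two forcing the suffix to stay in $\overline{S}(\eps)$) yields a nonempty feasible interval precisely when $2\eps < a^{S,\ell}_C + a^{S,r}_C$, $2\eps < \overline{a}^{\overline{S},\ell}_C + \overline{a}^{\overline{S},r}_C$, and $4\eps < a^{S,r}_C + \overline{a}^{\overline{S},\ell}_C$. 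The first two sums are strictly positive because the negation of (i) forbids a determining pair in the $x$-direction for the prefix of $C$ and for the suffix of $C$; the third is strictly positive because the negation of (ii) forbids the first kind of opposing pair across the right edge of $S$ and the left edge of $\overline{S}$. A fully symmetric analysis for $\Delta y_C$ gives three more strictly positive quantities from the $y$-direction, obtained from the remaining parts of the same two hypotheses.

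Since there are only finitely many curves, I choose $\eps > 0$ strictly smaller than one quarter of the minimum of all six positive quantities, minimized over $\C'$. Then each curve admits a valid additional translation, so $(S(\eps),\overline{S}(\eps))$ together with the new per-curve translations forms a valid solution of radius $r^* - \eps < r^*$, contradicting optimality. The one conceptual point to keep in mind is that the $x$- and $y$-direction constraints decouple cleanly because $\Delta x_C$ and $\Delta y_C$ are chosen independently, so the six scalar inequalities fully capture joint feasibility; this is where the refinement of the determining-pair and opposing-pair definitions to a single curve is used in an essential way.
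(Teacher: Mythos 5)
Your proof is correct and follows the same strategy as the paper's: assume neither condition holds, then contradict optimality by shrinking both squares while keeping their pinned corners fixed and retranslating each curve individually so that its prefix and suffix still fit. Your explicit slack-interval bookkeeping (the six per-curve positivity conditions, three per axis, derived from the negations of conditions (i) and (ii)) makes quantitative and rigorous what the paper argues only qualitatively via the sum of inner normals of the tight sides, so it is essentially a more careful rendering of the published argument.
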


\begin{proof}
	Since $(S,\overline{S})$ is a valid solution, then for each translated curve $C_t \in \C'$, there must exist a partition of $C_t$ defined by an index $i$ such that $C_t[1,i] \subset S$ and $C_t[i+1,m] \subset \overline{S}$.
	Assume that neither of the conditions stated in the lemma hold.
	Then the radius of the squares can be decreased to obtain a smaller pair of squares coincident with the same corners of $R$.
	If no vertices from the curves in $\C'$ lie on the inner sides of $S$ and $\overline{S}$---that is, the sides that are not collinear with sides of $R$---%
	then the radius can be reduced without translating the curves in $\C'$.
	If one or more prefix (suffix) vertices of lie on the inner sides of $S$ ($\overline{S}$), then $C_t$ is translated in a direction determined in the following way.
	For each such vertex $p$ lying on a side $s$ of its assigned square, let $\vec{n}$ be the direction of the inner normal of $s$.
	The direction of translation is the direction of the vector obtained by summing the normal vectors.
	Such a direction would allow all the vertices lying on the sides of their respective squares to remain on the side, unless two vertices lie on opposing sides of the same square, i.e., condition~(\ref{it:otc:trans:opop}) holds, or they lie on the opposing inner sides of different squares, i.e., condition~(\ref{it:otc:trans:opvert}) holds.
\end{proof}

\Cref{lem:otc:trans:optvert} implies that the optimality of a solution will be determined by the partition of a single curve. %
The minimum radius of a solution for a partition at $i$ of a curve $C_j$ under translation may be computed in constant time by finding the bounding boxes around the prefix and suffix of the curve, and the radius of the solution can then be obtained from the candidate pairs of determining and opposing vertices implied by the bounding boxes.
Specifically, the value $r_i^j$ is a lower bound on the optimal radius obtained by the partition at $i$ of curve $C_j$, and can be computed in constant time, for example, when $S$ is \emph{below-left} of $\overline{S}$: %
\[r_i^j := \frac{1}{2} \min \left\{
\begin{array}{lr}
\width(C_j[1,i]), 
\width(C_j[i+1,m]),
(\delta_x^* - (\min_{v \in C[i+1,m]} v.x - \max_{v \in C[1,i]} v.x))/2, \\
\height(C_j[1,i]), 
\height(C_j[i+1,m]), 
(\delta_y^* - (\min_{v \in C[i+1,m]} v.y - \max_{v \in C[1,i]} v.y))/2
\end{array}
\right\}
.\]

An optimal solution for $\C$ under translation where the squares coincide with a particular pair of opposing corners of $R$ can computed as $r := \max_{j \colon C_j \in \C} \min_{1\leq i \leq m} r_i^j$, i.e., the minimum radius of a pair of squares covering the partition of a curve, and then determining the largest such value over all curves.
The solutions are evaluated where $S$ and $\overline{S}$ coincide with  each of the four ordered pairs of opposite corners of $R$, and the overall solution is the smallest of these values.
We thus obtain the following result.

\begin{restatable}{theorem}{thmotctranslation}
    Given a set of curves $\C$ as input, an optimal solution to the \otc\ problem under translation using the discrete \frechet\ distance under the $L_\infty$ metric can be computed in $O(nm)$ time and $O(nm)$ space.
\end{restatable}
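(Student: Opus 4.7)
The plan is to combine the two preceding lemmas into a straightforward sweep over partitions of each curve. By \Cref{lem:otc:trans:bb}, we may restrict attention to configurations in which the two squares $S$ and $\overline{S}$ coincide with a pair of opposite corners of the axis-aligned rectangle $R$ of dimensions $\delta_x^* \times \delta_y^*$. There are only four such ordered pairs (below-left/above-right, above-left/below-right, and their two reverses), so it suffices to describe the algorithm for one of them, say $S$ at the bottom-left corner and $\overline{S}$ at the top-right corner, and run it four times.

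Fix such a configuration. \Cref{lem:otc:trans:optvert} asserts that the optimal radius is witnessed by a single curve, either through determining vertices inside one square or through an opposing pair straddling both. The explicit formula for $r_i^j$ displayed just before the theorem captures exactly these six possibilities (two determining constraints per square and two opposing constraints between them), and so the optimal radius for the fixed configuration equals $r = \max_{j} \min_{i} r_i^j$, where the outer max reflects that every curve must admit a valid partition, and the inner min reflects the freedom to choose the best partition index per curve.

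To evaluate this quantity quickly, I would precompute, for each curve $C_j$ and each index $i$, the four coordinates $\max_{v \in C_j[1,i]} v.x$, $\min_{v \in C_j[1,i]} v.x$, $\max_{v \in C_j[1,i]} v.y$, $\min_{v \in C_j[1,i]} v.y$, and their analogues for the suffixes $C_j[i+1,m]$. These are computed by a single left-to-right scan and a single right-to-left scan of each curve, using $O(m)$ time and $O(m)$ space per curve, for a total of $O(nm)$ time and space. The global values $\delta_x^*$ and $\delta_y^*$ are obtained in $O(nm)$ time as a byproduct. With these tables in hand, each $r_i^j$ is evaluated in $O(1)$ time directly from the formula, so the sum $\max_j \min_i r_i^j$ is assembled in $O(nm)$ time. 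Running over all four corner configurations and returning the minimum gives the optimal $r^*$ in $O(nm)$ time and $O(nm)$ space total.

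The main obstacle is not computational but bookkeeping: one must verify that the six terms in the formula for $r_i^j$ are the correct translations of the two alternatives in \Cref{lem:otc:trans:optvert}, that the $\delta_x^* - (\cdot)$ and $\delta_y^* - (\cdot)$ terms indeed capture the opposing-vertex condition under translation (so that the single curve can be shifted inside $R$ to realize the required prefix/suffix separation), and that the symmetric formulas for the other three corner configurations are obtained by straightforward sign/coordinate swaps. Once these six cases are spelled out, the correctness of $r = \max_j \min_i r_i^j$ follows directly from the two lemmas, and the complexity bound is immediate.
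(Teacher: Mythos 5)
Your proposal is correct and follows essentially the same route as the paper: reduce to the four opposite-corner configurations via \Cref{lem:otc:trans:bb}, use \Cref{lem:otc:trans:optvert} to justify the per-curve, per-partition quantity $r_i^j$ computable in $O(1)$ from prefix/suffix bounding boxes obtained by forward and backward scans, and return the minimum over configurations of $\max_j \min_i r_i^j$. The extra detail you give about precomputing the eight extremal coordinates per partition index is exactly the intended implementation of the paper's constant-time claim.
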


\subsection{\otc{} and \texorpdfstring{$L_{2}$}{L-2} metric}
\label{sec:otc:l2}

For the \otc{} problem and $L_2$ we need some more sophisticated arguments, but again we use a similar basic approach.

We first consider the decision problem: Given a value $r > 0$, determine whether there exists a segment $s$ such that $\max_{C_i \in \C} \dfd(s,C_i) \le r$.

For each curve $C \in \C$ and for each vertex $p$ of $C$, draw a disk of radius $r$ centered at~$p$ and denote it by $D(p)$. 
Let $\D$ denote the resulting set of $nm$ disks and let $\A(\D)$ be the arrangement of the disks in $\D$. The combinatorial complexity of $\A(\D)$ is $O(n^2m^2)$. 
Let $A$ be a cell of $\A(\D)$. Then, each curve $C=(p_1,\ldots,p_m) \in \C$ induces a bit vector $V_C$ %
of length~$m$; the $i$th bit of $V_C$ is 1 if and only if $D(p_i) \supseteq A$. Moreover, if $j$ is the index of the first 0 in~$V_C$, then {\em the suffix of curve $C$ at cell $A$} is $C[j,m]$.   

We maintain the vectors $V_C$ as we traverse the arrangement $\A(\D)$, by constructing a binary tree $T_C$, for each curve $C$, as described in the previous section. The leaves of $T_C$ correspond to the vertices of $C$, and in each node we store a single bit. Here, the bit at a leaf node corresponding to vertex $p_i$ is 1 if and only if $D(p_i) \supseteq A$, where $A$ is the current cell of the arrangement. For an internal node, the bit is 1 if and only if all the bits in the leaves of its subtree are 1. We can determine the current suffix of $C$ in $O(\log m)$ time, and the cost of an update operation is $O(\log m)$.
We also maintain the set $P$, where $P$ is the union of the suffixes of the curves in $\C$, and its corresponding region $X = \cap_{p \in P} D(p)$. Actually, we only need to know whether $X$ is empty or not.

We begin by constructing the trees $T_{C_1},\ldots,T_{C_n}$ and initializing all bits to 0, which takes $O(mn)$ time. We also construct the data structures for $P$ and $X$, where initially $P=C_1[1,m] \cup \cdots \cup C_n[1,m]$. This takes $O(nm \log^2 (nm))$ time in total. For $P$ we use a standard balanced search tree, and for $X$ we use, e.g., the data structure of Sharir~\cite{Sharir97}, which supports updates to $X$ in $O(\log^2 (nm))$ time. 
We now traverse $\A(\D)$ systematically, %
beginning with the unbounded cell of $\A(\D)$, which is not contained in any of the disks of $\D$.
Whenever we enter a new cell $A$ from a neighboring cell separated from it by an arrangement edge, then we either enter or exit the unique disk of $\D$ whose boundary contains this edge.
We thus first update the corresponding tree $T_C$ accordingly, and redetermine the suffix of $C$. We now may need to perform $O(m)$ update operations on the data structures for $P$ and $X$, so that they correspond to the current cell.
At this point, if $X \neq \emptyset$, then we halt and return \textsc{yes} (since we know that the minimum enclosing disk of the union of the prefixes is at most~$r$). If, however, $X = \emptyset$, then we continue to the next cell of $\A(\D)$, unless there is no such cell in which case we return \textsc{no}. 
We conclude that the decision problem can be solved in $O(n^2m^3\log^2 (nm))$ time and $O(n^2m^2)$ space.

Notice that the minimum radius $r^*$ for which the decision version returns \textsc{yes}, is determined by three of the $nm$ curve vertices. Thus, we perform a binary search in the (implicit) set of potential radii (whose size is $O(n^3m^3)$) in order to find $r^*$. Each comparison in this search is resolved by solving the decision problem for the appropriate potential radius. Moreover, after resolving the current comparison, the potential radius for the next comparison can be found in $O(n^2m^2 \log^2 (nm))$ time, as in the early near-quadratic algorithms for the well-known 2-center problem, see, e.g.,~\cite{AgarwalS94,Jaromczykk94,KatzS97}.

The following theorem summarizes the main result of this section.

\begin{restatable}{theorem}{thmotcltwo}
	Given a set of curves $\C$ as input, an optimal solution to the \otc\ problem using the discrete \frechet\ distance under the $L_2$ metric can be computed in $O(n^2m^3\log^3 (nm))$ time and $O(n^2m^2)$ space.
\end{restatable}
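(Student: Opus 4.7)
The plan is to reduce the optimization problem to a decision problem and then search for the optimal radius. Recall that the \otc{} problem is equivalent to finding a pair of balls $(B,\overline{B})$ of minimum radius such that every curve $C\in\C$ admits a partition $C[1,i], C[i+1,m]$ with $C[1,i]\subseteq B$ and $C[i+1,m]\subseteq \overline{B}$ (under $L_2$, the balls are disks). So I first aim for the following decision procedure: given $r>0$, decide whether such a pair of disks of radius $r$ exists.

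For the decision procedure, I would place a disk $D(p)$ of radius $r$ at each vertex $p$ of each curve, giving a set $\D$ of $nm$ disks. Let $\A(\D)$ be their arrangement, of combinatorial complexity $O(n^2m^2)$. For a chosen point $a$ (the first endpoint of $s$) lying in a cell $A$ of $\A(\D)$, the set of vertices $p$ with $a\in D(p)$ is exactly the set whose disks contain $A$; this is constant over $A$. For each curve $C=(p_1,\dots,p_m)$ and each cell $A$, the admissible prefixes are the prefixes $C[1,i]$ all of whose disks contain $A$, so the longest candidate prefix has length $j-1$, where $j$ is the first index with $D(p_j)\not\supseteq A$. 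The decision then asks whether there exists a cell $A$ such that choosing the longest admissible prefix per curve leaves a family of suffixes whose disks have a common intersection. I would traverse $\A(\D)$ through adjacent cells, maintaining for each curve $C$ a binary tree $T_C$ whose leaves store bits indicating $D(p_i)\supseteq A$ (with internal AND information) to read off the current suffix endpoint in $O(\log m)$ time; crossing an edge flips a single bit. Simultaneously maintain the multiset $P$ of current suffix vertices and the region $X=\bigcap_{p\in P} D(p)$, using Sharir's dynamic data structure for the common intersection of disks, which supports updates in $O(\log^2(nm))$ time and can detect emptiness of $X$. Accepting when $X\neq\emptyset$ yields the answer. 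An edge crossing may force $O(m)$ updates to $P$ and $X$ (when a curve's prefix extends or contracts), so the total decision time is $O(n^2m^3\log^2(nm))$ in $O(n^2m^2)$ space.

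To find $r^*$, I will exploit that any optimal radius is determined by at most three of the $nm$ curve vertices, since $r^*$ is the radius of either the smallest enclosing disk of a suffix/prefix set, which is determined by at most three points on its boundary. This gives an implicit set of $O(n^3m^3)$ candidate radii, over which I can perform a binary search, each comparison being resolved by an invocation of the decision algorithm. To avoid materializing the candidate set, I would borrow the tried-and-true trick used in the near-quadratic algorithms for the planar $2$-center problem~\cite{AgarwalS94,Jaromczykk94,KatzS97}: after each comparison, the next candidate radius is located in $O(n^2m^2\log^2(nm))$ time by implicit selection within the appropriate subfamily of triples. The overall $O(\log(nm))$ comparisons plus the cost of the corresponding selection steps give a total running time of $O(n^2m^3\log^3(nm))$, matching the bound.

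The main obstacle is the decision procedure: naively, each crossing of an arrangement edge could trigger $\Theta(m)$ insertions/deletions into the dynamic disk-intersection structure, and a careless traversal might retraverse the same edge many times. I would control this by walking through $\A(\D)$ along a DFS of its dual graph so each edge is crossed only $O(1)$ times, and amortize the $O(m)$ prefix-update cost against the $O(n^2m^2)$ cells to obtain the claimed $n^2m^3$ factor. Plugging in Sharir's data structure for the intersection of unit disks (after a uniform rescaling, since all disks have radius $r$) supplies the $\log^2(nm)$ factor. The remaining care is in the optimization layer, where one must verify that the standard parametric-search-style selection used for $2$-center carries over to triples-of-points defining enclosing-disk radii; I expect this to be essentially mechanical given the precedent.
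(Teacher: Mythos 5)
Your proposal is correct and follows essentially the same route as the paper: the same arrangement $\A(\D)$ of the $nm$ radius-$r$ disks, the same per-curve binary trees maintaining the longest admissible prefix during a traversal of the cells, the same dynamic disk-intersection structure of Sharir for testing emptiness of the common intersection of the suffix disks, and the same implicit binary search over the $O(n^3m^3)$ candidate radii resolved via the decision procedure, with the next candidate located as in the classical near-quadratic $2$-center algorithms. The only additions are minor implementation remarks (DFS traversal of the dual graph, rescaling to unit disks) that do not change the argument.
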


\bibliographystyle{alphaurlinit}
\bibliography{refs}

\end{document}